\documentclass[aps,pra,twocolumn,superscriptaddress]{revtex4-2}

\usepackage{graphicx}
\usepackage{amsmath,amssymb}
\usepackage{amsthm}
\usepackage{bm}
\usepackage{xurl}   

\newtheorem{lemma}{Lemma}[section]

\newcommand{\ket}[1]{\vert#1\rangle}
\newcommand{\bra}[1]{\langle#1\vert}

\newcommand{\ii}{\mathrm{i}}
\newcommand{\dd}{\mathrm{d}}
\newcommand{\iz}{\mathrm{i}0^+}

\begin{document}

\title{Quantum scattering by intersecting $\delta$-potential barriers:\\
from Gaudin's kaleidoscope to quantum Galperin billiards}
\author{Yi-Cong Yu}
\affiliation{Wuhan Institute of Physics and Mathematics, Innovation Academy for Precision Measurement Science and Technology, Chinese Academy of Sciences, Wuhan 430071, China}
\author{Wen-Jie Qiu}
\affiliation{Wuhan Institute of Physics and Mathematics, Innovation Academy for Precision Measurement Science and Technology, Chinese Academy of Sciences, Wuhan 430071, China}
\author{Xiaoming Cai}
\email{cxmpx@wipm.ac.cn}
\affiliation{Wuhan Institute of Physics and Mathematics, Innovation Academy for Precision Measurement Science and Technology, Chinese Academy of Sciences, Wuhan 430071, China}
\date{\today}

\begin{abstract}
We develop a general framework for the scattering of a plane wave by a class of
Gaudin kaleidoscope models---straight $\delta$-potential barriers intersecting at
a common point. The projected Lippmann--Schwinger equations split into a singular
part, a finite pole closure generated by a geometric moving rule, and a regular
remainder, the outgoing state being an atomic measure on the circle. At the
special angles $\pi/N$, where the intersecting barriers generate the dihedral
group $D_N$, the number of channels stays fixed, whereas at generic angles
channels are created and destroyed, opening smoothly from zero weight as a
critical direction is crossed. This leaves no singular trace in the probabilities, being
carried instead by the phase shifts---a quantum--classical correspondence beyond
the reach of the coordinate Bethe ansatz. The same equations solve the quantum
Galperin billiards exactly, the method of images reducing them to a single
linear system whose phase shifts follow in closed form.
\end{abstract}

\maketitle

\section{Introduction}

Gaudin observed that the Bethe wave function of one-dimensional particles with repulsive $\delta$ interactions is a coherent superposition of reflected waves of a generalized kaleidoscope, with mirrors generating a finite reflection group \cite{bethe1931,yang1967,yang1968,lieb1963,lieb1963b,gaudin2014bethe, coxeter1973regular,humphreys1992reflection}. The associated Gaudin structures remain central to contemporary integrable models \cite{buric2021gaudin}. We recently encoded the consistency of this coordinate Bethe ansatz (CBA) in the kaleidoscope Yang--Baxter equation (KYBE), the statement that multiple scattering closes around the center \cite{qiu2026}. The KYBE is a necessary condition for solvability. For intersecting mirrors it becomes concrete: those meeting at the dihedral angle $\pi/N$ must carry equal couplings \cite{jackson2024}.

Solvability is not settled by the KYBE alone. The mass-imbalanced two-body problem in a hard-wall trap is exactly solvable although it violates these mirror conditions \cite{lqzc2019}. The asymmetric Bethe ansatz (ABA) is equally telling: it keeps exact solvability when part of the mirrors become fully reflecting, yet only inside the symmetry subspace selected by the boundary \cite{jackson2024,olshanii2025}. Solvability thus depends not only on the KYBE, but also on the boundary conditions and the symmetry sector on an equal footing.

In solving the Bethe equations for the bounded kaleidoscope, we encountered a purely technical restriction: the problem is solvable only for a few values of $N$ \cite{qiu2026}. This restriction has no classical geometric counterpart, since every angle $\pi/N$ is equally admissible as the opening angle of a finite reflection group \cite{richens1981,zemlyakov1975,ott2002}. Rather, it arises from the boundary \cite{qiu2026}. Removing the boundary reduces the problem to its simplest symmetric scattering setting, governed solely by reflections in the barriers. We therefore consider a plane wave scattered by straight $\delta$-potential barriers intersecting at a common point (Fig.~\ref{fig:schematic}). This model is the main focus of the present work.

\begin{figure}[t]
\centering
\includegraphics[width=0.95\columnwidth]{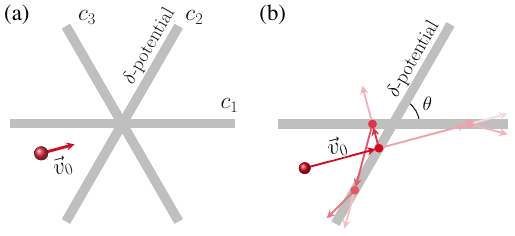}
\caption{
Schematic of the models considered in this work.
(a) General configuration of $\delta$-potential barriers intersecting at a common vertex, with $c_m$ the coupling strength of the $m$th barrier and $\vec v_0$ the incident velocity. Three equally spaced barriers are shown only for clarity; their number and relative angles are arbitrary.
(b) Two-barrier configuration used to analyze the essential scattering mechanism. The barriers intersect at an angle $\theta$, and each encounter splits a ray into reflected and transmitted branches, whose relative weights are indicated schematically by color intensity. Rays passing on opposite sides of the vertex follow different scattering sequences, while the trajectory aimed directly at the vertex is singular. This minimal configuration captures the essential structure of the general problem.
}
\label{fig:schematic}
\end{figure}

The contact-interacting one-dimensional gases in which these structures arise are themselves an active platform, with exact solutions reviewed in Ref.~\cite{minguzzi2022exact} and integrable dynamics probed directly in experiment \cite{senaratne2022spincharge}. Their few-body sector is a special case of the model considered here: the one-dimensional three-body problem with contact interactions. In Jacobi coordinates, a system of two bosons and one impurity is mapped onto a single particle moving in the plane and subject to three $\delta$-potential barriers intersecting at a common point, with the angles between the barriers determined by the mass ratio \cite{guijarro2018,kartavtsev2009,nishida2018,mcguire1964}. Zero-range few-body problems of this class remain under active study, both in their dependence on the spatial dimension \cite{rosa2022Ddim} and in their mass-imbalanced Efimov spectra \cite{oi2024efimov,blume2014heavy}; for reviews of universal few-body physics and of impurities in bosonic gases we refer to Refs.~\cite{greene2017universal,naidon2017efimov,grusdt2025impurities}. We previously studied this problem in Ref.~\cite{liu2021}, but only its bound-state sector: the spectrum was obtained from the Skornyakov--Ter-Martirosyan equations \cite{skorniakov1957}, and related impurity bound states have since been obtained by complementary methods \cite{shi2016boundstates}, while the scattering problem was left unresolved.

The present model also belongs to the broader class of ``leaky'' quantum structures, namely quantum systems with $\delta$-interactions supported on lines or curves \cite{exner2003approx,exner2008review}. Previous studies of such structures have been predominantly mathematical, focusing mainly on their spectra, resonances, and self-adjoint realizations \cite{kostrykin1999kirchhoff,exner2008review}. Closely related zero-range models have been revisited recently, from the renormalization of point scatterers in two and three dimensions \cite{loran2022multidelta} to analytically solvable Kronig--Penney chains \cite{sroczynska2025kronig}. The $\delta$ potential itself remains a standard exactly solvable model, whose spectral and scattering properties continue to be refined \cite{belloni2014delta,camblong2019levinson}. By contrast, the scattering of plane waves by intersecting repulsive $\delta$-potential barriers with $c>0$ has remained largely unexplored \cite{exner2008review}.

In this work, we solve the scattering problem for the intersecting $\delta$-potential barriers shown in Fig.~\ref{fig:schematic}. We develop an exact and systematic method in which a moving rule reduces the singular part of the projected Lippmann--Schwinger equations to a finite linear system, while the remaining nonsingular part obeys a regular integral equation. The solutions single out the angles $\theta=\pi/N$ at which the intersecting mirrors generate the dihedral group $D_N$ \cite{qiu2026}. At these angles, the number of outgoing channels remains fixed as the initial condition is varied; at generic angles, by contrast, channels open and close as the classical fold points are crossed. This classical sensitivity does not manifest itself as singular behavior in the scattering probabilities: new channels open continuously from zero probability, with finite Fisher information. Instead, it is encoded in the scattering phase shifts, thereby recovering the quantum--classical correspondence at the level of phase information. As a direct application, we solve the quantum Galperin billiard exactly: the method of images converts the walls into a reflection antisymmetry of the wave function, which collapses the projected equations to a single linear system and organizes its poles into chains, so that both the outgoing distribution and the scattering phase shifts of the classical channels follow in closed form---the quantum counterpart to the classical Galperin problem \cite{galperin2003} left open in Ref.~\cite{cai2022}.

The paper is organized as follows. In Sec.~\ref{sec:model} we define the model of intersecting $\delta$-potential barriers and derive the projected integral equations from the Lippmann--Schwinger equation. Section~\ref{sec:method} develops the exact method: a moving rule reduces the singular part of the projected equations to a finite linear system, and a regular integral equation determines the remainder. Section~\ref{sec:outgoing} constructs the atomic outgoing distribution within rigorous formal scattering theory, and closes the method with two explicit solutions, a two-barrier configuration and the three-barrier kaleidoscope. Section~\ref{sec:sensitivity} studies the initial-condition sensitivity of the outgoing distribution: at the level of the probabilities it disappears, and it is transferred to the scattering phases. Section~\ref{sec:outlook} applies the framework to the quantum Galperin billiards, where the method of images turns the walls into a reflection antisymmetry and collapses the coupled equations to a single linear system, so that the outgoing distribution and the scattering phase shifts of the classical channels follow in closed form. Section~\ref{sec:conclusion} concludes.

\section{Model and the projected integral equation}
\label{sec:model}

Consider $M$ straight $\delta$-potential barriers crossing at the origin in the plane, as shown in Fig.~\ref{fig:schematic}(a). The $m$-th barrier is characterized by its orientation angle $\theta_m \in [0, \pi)$ and its strength $c_m>0$; its unit-strength potential is supported on the line $y\cos\theta_m-x\sin\theta_m=0$ with the form 
\begin{align}
\label{eq:Vm}
V_m(x,y) := \delta\bigl(y\cos\theta_m - x\sin\theta_m\bigr),
\end{align}
and the full scattering potential is the weighted superposition
\begin{align}
\label{eq:V}
V(x,y) := \sum_{m=1}^{M} c_m V_m(x,y).
\end{align}
Since a straight line does not distinguish $\theta_m$ from $\theta_m+\pi$, the barrier angles are defined modulo $\pi$.

We work in natural units $\hbar=2m=1$, in which the Hamiltonian $H=H_0+V$ has the free part $H_0=k_x^2+k_y^2$ in momentum space. The stationary scattering state of energy $E=a^2$ ($a>0$) satisfies the Lippmann--Schwinger equation \cite{taylor1972,newton1982}
\begin{align}
\label{eq:LS}
\ket{\Psi^{(+)}}
&= \ket{\Phi_0}
+\frac{1}{a^2+\iz-k_x^2-k_y^2}\,\hat V\,\ket{\Psi^{(+)}},
\end{align}
with the outgoing ($+$) boundary condition.

In momentum space, the potential operator can be factorized into rotations and a line-integration operator. We define the rotation $R_\theta$ acting on points in $\mathbb{R}^2$ as
\begin{align}
\label{eq:Rtheta}
R_\theta(x,y) := (x\cos\theta - y\sin\theta,\; x\sin\theta + y\cos\theta),
\end{align}
so its pullback acts on functions as
\begin{align}
\label{eq:Rop}
(\hat R_\theta\psi)(x,y) := \psi\bigl(R_{-\theta}(x,y)\bigr).
\end{align}
With the vertical projection operator $\hat{I}_y$ defined by $(\hat I_y\psi)(x) := \int_{-\infty}^{\infty}\dd y\,\psi(x,y)$, each barrier factorizes as $\hat V_m = \frac{1}{2\pi}\hat R_{\theta_m}\hat I_y\hat R_{-\theta_m}$, and the full potential reads
\begin{align}
\label{eq:Vrot}
\hat V = \sum_m \frac{c_m}{2\pi}\hat R_{\theta_m}\hat I_y\hat R_{-\theta_m}.
\end{align}
Projecting the Lippmann--Schwinger equation~\eqref{eq:LS} with $\hat I_y\hat R_{-\theta_k}$, i.e.\ onto the direction of the $k$-th barrier, and defining the projected wave functions $f_k:=\hat I_y\hat R_{-\theta_k}\ket{\Psi^{(+)}}$ and $f_k^{(0)}:=\hat I_y\hat R_{-\theta_k}\ket{\Phi_0}$, we obtain the coupled integral equations
\begin{align}
\label{eq:integral}
f_k
&= f_k^{(0)}+\frac{1}{2\pi}\sum_{m=1}^{M} c_m
\hat I_y\hat G_0\hat R_{(\theta_m-\theta_k)} f_m,
\end{align}
where $\hat G_0 := (a^2+\iz-H_0)^{-1}$ is the free Green operator, which is multiplication by $1/(a^2+\iz-k_x^2-k_y^2)$ in momentum space and commutes with rotations. The kernel of each term admits the explicit representation
\begin{align}
\label{eq:kernel}
&(\hat I_y\hat G_0\hat R_\theta f)(x)
= -\frac{1}{2\sqrt{a^2-x^2}}\int_{-\infty}^{\infty}\dd\eta\,f(\eta)\notag\\
&\quad \times\Bigl[\frac{1}{\eta-\Gamma^+_\theta(x)-\iz} -\frac{1}{\eta-\Gamma^-_\theta(x)+\iz}\Bigr],
\end{align}
with
\begin{align}
\label{eq:Gamma}
\Gamma^\pm_\theta(x)
= x\cos\theta \pm \sqrt{a^2-x^2}\,\lvert\sin\theta\rvert.
\end{align}
Here and below, the square root is taken on its principal branch, with positive imaginary part for $|x|>a$. For $|x|<a$, the two quantities $\Gamma^\pm_\theta(x)$ are real, and the $\iz$ prescriptions place the corresponding poles on opposite sides of the real $\eta$ axis. The integral equations therefore contain Cauchy-type singularities associated with the positive-energy shell. For $|x|>a$, the poles move away from the integration contour. The apparent square-root singularity at $x=\pm a$ is removable, because the difference in brackets in Eq.~\eqref{eq:kernel} vanishes at the same rate as $\sqrt{a^2-x^2}$.

The corresponding negative-energy problem is considerably simpler. Setting $E=-\varkappa^2<0$ removes the on-shell singularity of the free Green operator, and the kernel becomes nonsingular on the real integration contour. In the absence of an incident wave, the driving terms also vanish, $f_k^{(0)}=0$, leaving a homogeneous system whose nontrivial solutions determine the discrete bound-state spectrum. This is the Skornyakov--Ter-Martirosyan formulation used in our earlier study of the three-body bound-state problem \cite{liu2021,skorniakov1957}.

For scattering at positive energy, by contrast, the singularity of the kernel is accompanied by a distributional driving term. For an incident plane wave of definite momentum, the free state in momentum space is
\begin{align}
\langle k_x,k_y\vert\Phi_0\rangle
= \delta(k_x-a\cos\phi_0)\,
  \delta(k_y-a\sin\phi_0),
\end{align}
and hence
\begin{align}
\label{eq:f0}
f_k^{(0)}(x)
= \delta\bigl(x-a\cos(\phi_0-\theta_k)\bigr).
\end{align}
The positive-energy equations must therefore be understood in the distributional sense. Starting from the $\delta$-function driving terms, the on-shell kernel propagates singular contributions among the coupled projected functions. Our central task is to determine this singular structure and solve the resulting equations for each incident direction $\phi_0$. Once the projected functions $f_m$ are obtained, the full scattering state follows directly from the Lippmann--Schwinger equation,
\begin{align}
\label{eq:psi}
\ket{\Psi^{(+)}}
= \ket{\Phi_0}
+\frac{1}{a^2+\iz-k_x^2-k_y^2}
\sum_m \frac{c_m}{2\pi}\,\hat R_{\theta_m} f_m.
\end{align}

As a preliminary simplification, we exploit a scaling law that maps the scattering problem at arbitrary positive energy to an equivalent one at unit energy. Denoting the set of interaction strengths by $\boldsymbol{c}=(c_1,\ldots,c_M)$, the projected functions transform as
\begin{align}
\label{eq:scaling}
f_k\bigl(x\,\big|\,\boldsymbol{c},a\bigr)
&= \frac{1}{a}\,
f_k\!\left(\frac{x}{a}\,\Big|\,\frac{\boldsymbol{c}}{a},1\right),
\end{align}
while the driving terms obey
\begin{align}
f_k^{(0)}(x\,|\,a)
=\frac{1}{a}f_k^{(0)}\!\left(\frac{x}{a}\,\Big|\,1\right),
\end{align}
by the homogeneity of the $\delta$ function in Eq.~\eqref{eq:f0}. Thus, it suffices to solve the scattering problem at $a=1$, which we assume henceforth while suppressing the dependence on $\boldsymbol{c}$ and $a$.

\begin{figure}[t]
\centering
\includegraphics[width=\columnwidth]{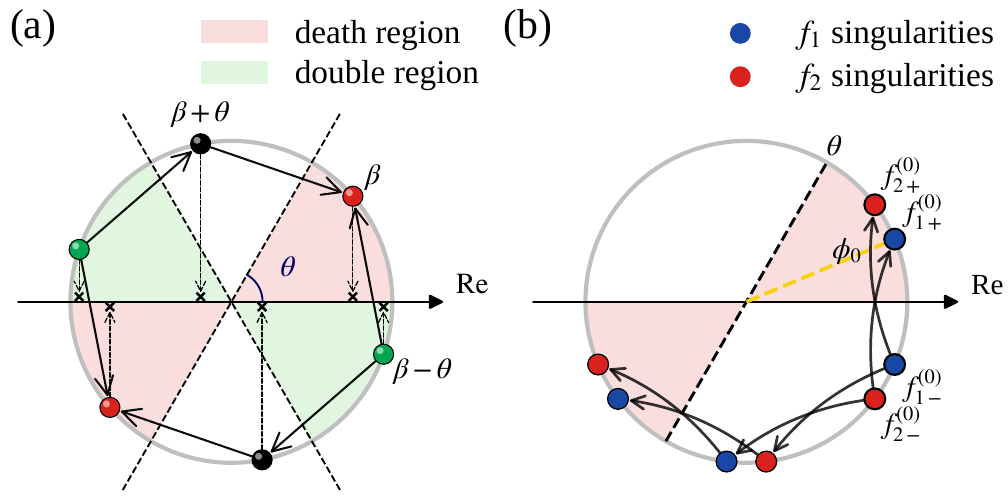}
\caption{
Schematic of the moving rule and the resulting pole closure.
(a) Unit-circle representation of a pole $\ket{\beta}$: its vertical
projection gives the pole position $\cos\beta$ on the real axis, while the
upper or lower semicircle specifies the side of the $i0^+$ prescription.
The singular action of $\hat K_\theta$ generates the allowed moves
$\beta\mapsto\beta\pm\theta$, indicated by the directed arrows. No pole is
generated in the death region (red), whereas both moves occur in the double
region (green).
(b) Example of the closure construction for two intersecting
$\delta$-potential barriers with $\theta_1=0$, $\theta_2=\theta=\pi/3$, and
incident angle $\phi_0=0.4$. Blue and red dots represent the pole sets
$C_1$ and $C_2$ of $f_1$ and $f_2$, respectively; the labeled dots are the
seed poles supplied by the incident wave. The arrows trace successive
applications of the moving rule until the finite closure is reached.
}
\label{fig:moving-rule}
\end{figure}

\section{The moving-rule method}
\label{sec:method}

At positive energy, both the kernel of Eq.~\eqref{eq:integral} and the projected wave functions must be treated distributionally. The $\delta$-function source introduces singularities on the energy shell, while the on-shell kernel propagates them among the coupled equations. A direct discretization would obscure this singular structure and is generally ill suited to a stable numerical treatment. We therefore isolate the singularities analytically and treat their positions and coefficients as algebraic data. The analysis proceeds in three steps: we introduce the elementary pole singularities, identify those injected by the incident wave, and determine how the kernel propagates them.

The elementary singularity is a simple pole whose real part lies in the energy shell $[-1,1]$, and which approaches the real axis from either above or below. Such a pole can be labeled by an angle $\beta$ and written as
\begin{align}
\label{eq:pole}
\ket{\beta}
:=  
\frac{1}{
\zeta-\cos\beta-\ii0^+\cdot\operatorname{sign}(\sin\beta)
},
\end{align}
where $\zeta$ denotes the complexified projected momentum. The angle $\beta$ encodes both the position $\cos\beta$ on the real axis and the side from which the pole approaches it [Fig.~\ref{fig:moving-rule}(a)]. In particular, $\beta$ and $2\pi-\beta$ correspond to the same real position but to opposite sides of the axis, and hence represent distinct pole distributions. The limiting cases $\sin\beta=0$ are understood through one-sided limits.

We next identify the singularities injected by the incident wave. At unit energy, the driving term in Eq.~\eqref{eq:f0} is
\begin{align}
f_k^{(0)}(x)
=
\delta\bigl(x-\cos(\phi_0-\theta_k)\bigr)
\end{align}
with $\theta_{0k} := \vert \phi_0 - \theta_k \vert$. 
The Sokhotski--Plemelj identity then gives
\begin{align}
\label{eq:seed}
f_k^{(0)}
=
\frac{1}{2\pi\ii}
\left(
\ket{\theta_{0k}}-\ket{-\theta_{0k}}
\right),
\end{align}
where $\ket{\theta_{0k}}$ and $\ket{-\theta_{0k}}$ approach the real axis from opposite sides. Thus, the incident wave supplies each projected equation with a pair of seed poles at the same on-shell momentum, with coefficients $\pm1/(2\pi\ii)$.

The remaining question is how these seed poles propagate under the kernel. The moving rule, stated in Eq.~\eqref{eq:rules} below and derived in Appendix~\ref{app:movingrule}, shows that the singular part of the action of $\hat I_y\hat G_0\hat R_\theta$ on $\ket{\beta}$ contains at most two poles of the same form. Their angles are obtained by the moves
\begin{align}
\beta\longmapsto\beta\pm\theta,
\end{align}
with the corresponding coefficients determined by the rule. In certain angular regions, referred to below as the death regions, a move produces no pole. Crucially, the moving rule preserves both defining properties of the pole family: the real part remains in $[-1,1]$, and the pole remains infinitesimally close to the real axis.

The singular sector is therefore closed under the integral equations. The driving terms consist of the seed poles $\ket{\pm\theta_{0k}}$, and every subsequent singularity is generated by repeated applications of the moving rule. It follows inductively that all singularities of the projected functions are poles of the form in Eq.~\eqref{eq:pole}. Away from the interval $[-1, 1]$, the projected functions $f_k$ are regular on the real axis.

This closed singular structure allows us to separate each projected function into singular and nonsingular parts,
\begin{align}
\label{eq:decomp}
f_k
=
f_k^{(\mathrm S)}+f_k^{(\mathrm N)},
\end{align}
where $f_k^{(\mathrm S)}$ is a superposition of the poles generated from the seeds and $f_k^{(\mathrm N)}$ is regular on the real axis. Substituting this decomposition into Eq.~\eqref{eq:integral} and applying the moving rule fixes the pole coefficients through a finite linear system. After these singular contributions are subtracted analytically, the nonsingular parts satisfy regular integral equations. We develop this construction in the remainder of this section.

\begin{figure}[t]
\centering
\includegraphics[width=\columnwidth]{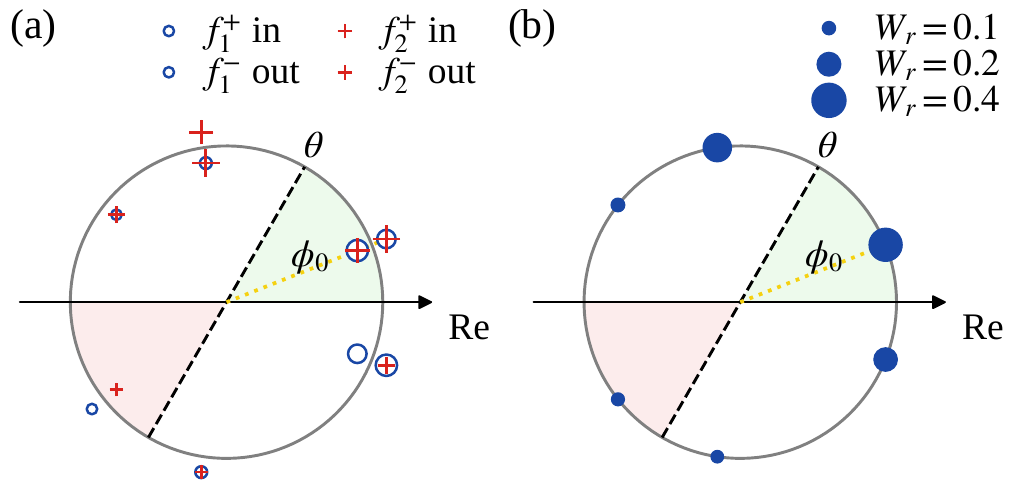}
\caption{(a) Distribution of all singular-pole triples of the singular part on
the outgoing-angle circle: blue open circles and red crosses denote triples
generated by $f_1$ and $f_2$, respectively; $\lambda=+1$ (inside the circle)
and $\lambda=-1$ (outside); the marker size encodes the amplitude modulus
$|A|$. The $\theta$ diameter and the light-green and light-red sectors mark
the barrier directions and the angular wedge. (b) Normalized atomic outgoing
distribution: one dot per outgoing channel, with size encoding the weight
$W_r$ obtained by coherent addition of same-angle triples. Parameters:
$\theta_1=0$, $\theta_2=\theta=\pi/3$, $\phi_0=0.718\,\theta/2$, $c_1=0.5$,
$c_2=1.0$; the closed form $2\pi^2\sum_r\widetilde W_r=1$ is checked numerically.}
\label{fig:triples}
\end{figure}

\subsection{The singular part}
To state the action of the kernel on a pole explicitly, we rewrite Eq.~\eqref{eq:integral} in terms of the kernel operator
\begin{align}
\hat K_\theta
:= -\frac{1}{\pi\ii}\,\hat I_y\hat G_0\hat R_\theta,
\end{align}
so that Eq.~\eqref{eq:integral} reads
\begin{align}
\label{eq:Kform}
f_k &= f_k^{(0)} - \sum_{m=1}^{M}\frac{\ii c_m}{2}\,\hat K_{\theta_{km}} f_m
\end{align}
with $\theta_{km} := \lvert\theta_k-\theta_m\rvert $ the angle between the two barrier directions. The barrier angles themselves are defined modulo $\pi$ [Eq.~\eqref{eq:V}], but the difference is taken between the representative directions used there: the kernel $\hat K_\theta$ distinguishes $\theta$ from $\pi-\theta$, so that the two barrier directions $\pm\theta$ of Sec.~\ref{sec:outlook} give $\theta_{12}=2\theta$. 
The action of the kernel on a pole is the geometric moving rule (Fig.~\ref{fig:moving-rule}):

\begin{align}
\label{eq:rules}
\hat K_\theta\ket{\beta}
& \xrightarrow{\mathrm{sing}} \frac{1}{|\sin\beta|}\Bigl[
\Theta\bigl(\sin\beta\,\sin(\beta-\theta)\bigr)\ket{\beta-\theta}
\notag\\
&\quad
-\Theta\bigl(-\sin\beta\,\sin(\beta+\theta)\bigr)\ket{\beta+\theta}\Bigr],
\end{align}
Equation~\eqref{eq:rules} is the moving rule: it states that the singular part of the action of $\hat K_\theta$ on a pole $\ket{\beta}$ consists of at most two poles of the same form, whose complete derivation from the kernel representation is given in Appendix~\ref{app:movingrule}. Here $\xrightarrow{\mathrm{sing}}$ means that only the singular part generated by the kernel action is retained, with regular terms omitted. The Heaviside function is defined by $\Theta(x)=1$ for $x>0$ and $\Theta(x)=0$ for $x<0$; its value at $x=0$ is immaterial, as the boundary cases are understood by one-sided limits. Each allowed move carries a residue factor $\pm1/|\sin\beta|$. In the death region $(0,\theta)\cup(\pi,\pi+\theta)$ no move is allowed and the propagation terminates, whereas in the double region both moves are realized [Fig.~\ref{fig:moving-rule}(a)]. Starting from the seeds $\pm\theta_{0k}$, with $\theta_{0k}=|\phi_0-\theta_k|$, for each projected function $f_k$, let $C_k$ denote the set of pole angles generated by the moving rules, initialized by the seed poles as $C_k=\{\pm\theta_{0k}\}$. Repeated application of the moving rules generates the closure
\begin{align}
\label{eq:closure}
C_k \;\leftarrow\; C_k \cup \bigcup_{m\neq k} T_{\theta_{km}}(C_m),
\end{align}
where $T_\theta$ denotes the map defined by the moving rules. The finiteness of this closure is readily seen for commensurate barrier angles and for the two-barrier problem, which are the main cases studied here. For an arbitrary barrier configuration, the result is nontrivial; we prove in Appendix~\ref{app:closure} that the moving rules nevertheless always generate a finite closure. This finiteness is essential, as it reduces the singular sector to a finite-dimensional linear system for an arbitrary barrier configuration.

On the closure, the singular part is expanded as
\begin{align}
f_k^{(\mathrm S)}(z)
= \sum_{\beta\in C_k} \frac{F_{k\beta}}{z-\cos\beta-\iz\,\operatorname{sign}(\sin\beta)}.
\end{align}
Inserting this expansion into Eq.~\eqref{eq:Kform} and matching the residues of the poles $\ket{\beta}$ with $\beta\in C_k$, the coefficients $F_{k\beta}$ are determined by a finite linear system. To state it explicitly, we number the closure of each projected function as $C_k=\{\beta_{k1},\ldots,\beta_{kN_k}\}$ and write $F_{kq}\equiv F_{k\beta_{kq}}$. The moving rule~\eqref{eq:rules} generates, for each pair $(k,m)$ of barriers, an $N_k\times N_m$ propagation matrix $K_{km}$ with entries
\begin{align}
\label{eq:Kmat}
(K_{km})_{qq'}
= \Theta\bigl(\sin\beta_{mq'}\sin(\beta_{mq'}-\theta_{km})\bigr)
\,\delta_{\beta_{kq},\beta_{mq'}-\theta_{km}}
\notag\\
\quad
- \Theta\bigl(-\sin\beta_{mq'}\sin(\beta_{mq'}+\theta_{km})\bigr)
\,\delta_{\beta_{kq},\beta_{mq'}+\theta_{km}},
\end{align}
where the angle differences are taken modulo $2\pi$ and $\delta$ is the Kronecker delta on the closure: a matrix element is nonzero only if the moved pole $\beta_{mq'}\pm\theta_{km}$ coincides with an element of $C_k$. The diagonal blocks are fixed by the identity move: $\theta_{kk}=0$ gives $K_{kk}=I_{N_k}$. It is convenient to absorb the residue factors of the moving rule into rescaled unknowns $G_{kq}:=F_{kq}/|\sin\beta_{kq}|$. With $S_k:=\mathrm{diag}(|\sin\beta_{k1}|,\ldots,|\sin\beta_{kN_k}|)$, matching residues yields the block linear system
\begin{align}
\label{eq:linear}
\sum_{m=1}^{M}\Bigl[\delta_{km}S_k + \frac{\ii c_m}{2}K_{km}\Bigr] G_m
= F_k^{(0)}
\end{align}
for $k = 1, \cdots, M$, where $G_m$ and $F_k^{(0)}$ are column vectors of length $N_m$ and $N_k$. The driving data on the right-hand side follow from the seed decomposition of Eq.~\eqref{eq:seed}: the only nonzero entries of $F_k^{(0)}$ are
\begin{align}
F_k^{(0)}(\theta_{0k}) = +\frac{1}{2\pi\ii},
\quad
F_k^{(0)}(2\pi-\theta_{0k}) = -\frac{1}{2\pi\ii},
\end{align}
i.e.\ the seed pair $\ket{\pm\theta_{0k}}$ of the incident wave, in the convention of Eq.~\eqref{eq:seed}. Solving Eq.~\eqref{eq:linear} for $G$ and restoring $F_{kq}=|\sin\beta_{kq}|\,G_{kq}$ determines the singular part completely. This is the linear system alluded to in Eq.~\eqref{eq:decomp}.

The solution of the singular linear system is the complete singular content of the projected functions. For the two-barrier configuration with $\theta=\pi/3$ and unequal strengths $c_1=0.5$, $c_2=1.0$ (Fig.~\ref{fig:triples}), the resulting poles of $f_1$ and $f_2$, mapped onto the circle of outgoing angles and weighted by their residues, are displayed in Fig.~\ref{fig:triples}(a), together with the atomic outgoing distribution they generate in Fig.~\ref{fig:triples}(b); the way this distribution is computed from these poles is given in Sec.~\ref{sec:outgoing}.

As another instance we solve the three-barrier kaleidoscope, with barriers at $0$, $\pi/3$ and $2\pi/3$ of equal strength $c=1$ and an incident plane wave at $22^\circ$ inside the fundamental domain. Here the symmetry alone fixes the result: the three barriers form the mirror system of the dihedral group $D_3$, so the poles generated by the moving rule are distributed uniformly around the circle, eight on each barrier, and the outgoing atoms fall into the six directions of the corresponding mirror-image set. The distribution is shown in Fig.~\ref{fig:kaleidoscope}.

\begin{figure}[t]
\centering
\includegraphics[width=\columnwidth]{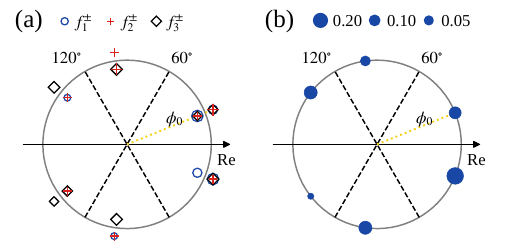}
\caption{Three-barrier kaleidoscope: (a) the poles of the scattering state, colored by the barrier that generates them, and (b) the resulting outgoing channels.}
\label{fig:kaleidoscope}
\end{figure}

\subsection{The non-singular part}

\begin{figure}[t]
\centering
\includegraphics[width=\columnwidth]{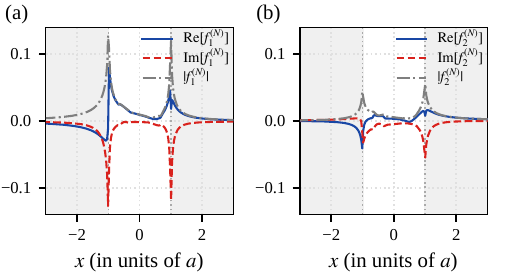}
\caption{Numerical solution of the non-singular part for the two-barrier
configuration of Fig.~\ref{fig:moving-rule}: $\theta_1=0$,
$\theta_2=\theta=\pi/3$, incident angle $\phi_0=0.718\,\theta/2$,
$c_1=0.5$, $c_2=1.0$. Left and right panels show $f_1^{(\mathrm N)}(x)$ and
$f_2^{(\mathrm N)}(x)$: solid blue, dashed red, and dash-dotted gray lines
denote the real part, the imaginary part, and the modulus, respectively.
Vertical dotted lines mark $x=\pm1$, the branch points of
$s(x)=\sqrt{1-x^2}$; the light-gray background marks the region
$|x|>1$ outside the energy shell. Grid: $N_x=400$ nodes on
$[-K_{\mathrm{cut}},K_{\mathrm{cut}}]=[-6,6]$, boundary-value regulator
$\varepsilon=10^{-8}$; the linear system is solved to residual
$\sim10^{-16}$ and agrees with fixed-point iteration to $\sim10^{-9}$.}
\label{fig:normal}
\end{figure}
With the singular part at hand, it remains to determine the normal part $f_k^{(\mathrm N)}$. A direct evaluation of $\hat K_\theta f_m^{(\mathrm S)}$ on the real axis is not feasible: although the pole positions and residues of $f_m^{(\mathrm S)}$ are already known, the full action $\hat K_\theta$ on them still contains poles infinitesimally close to the real axis, and subtracting the singular pieces numerically would suffer from catastrophic cancellation near these poles. The poles must therefore be removed analytically before any grid evaluation. To this end, we define the operator $\mathcal S_\theta$ so that $\mathcal S_\theta\ket{\beta}$ is exactly the right-hand side of the moving rule, Eq.~\eqref{eq:rules}: the newly generated poles $\ket{\beta\pm\theta}$ together with their residue factors. The regularized remainder of a single pole is then defined by
\begin{align}
\label{eq:Qdef}
Q_{\theta,\beta}(z)
:= s(z)\left(\hat K_\theta\ket{\beta}-\mathcal S_\theta\ket{\beta}\right),
\end{align}
where the factor $s(z):=\sqrt{1-z^2}$ is included because the same factor multiplies the non-singular equations below. The subtraction in Eq.~\eqref{eq:Qdef} is exact at the level of residues: the two terms have identical pole structure, so all poles of $\hat K_\theta\ket{\beta}$ that belong to the singular sector are cancelled analytically, and $Q_{\theta,\beta}(z)$ is a regular function of $z$ on the real axis. Carrying it out sector by sector in $\beta$, one obtains closed-form expressions for $Q_{\theta,\beta}$ that are valid for every barrier angle $\theta\in(0,\pi)$; the complete piecewise forms are collected in Appendix~\ref{app:Q}. The sectors follow the output pattern of the moving rule~Eq.~\eqref{eq:rules} for the given pole: when no move is allowed (death region), the remainder reduces to the full action, with no pole to subtract; when one or both moves are allowed, the newly generated poles are subtracted together with their exact residues. In every sector, $Q_{\theta,\beta}(z)$ is therefore a regular function of $z$ on the real axis, with sector boundaries understood by one-sided limits and the apparent endpoint singularities at $x=\pm1$ being the integrable branch points of $s(z)=\sqrt{1-z^2}$, not poles. These forms retain the finite contribution of the singular scattering to the non-singular part and are evaluated as ordinary functions on the momentum grid.

The source terms of the non-singular equations are generated by these remainders. Substituting the expansion of $f_m^{(\mathrm S)}$ into Eq.~\eqref{eq:Kform} and collecting everything that is not absorbed by the singular linear system, one finds that $h_k$ is built from the known residues $F_{m\beta'}$ as
\begin{align}
\label{eq:normal-source}
h_k(z)
= -\sum_{m=1}^{M}\sum_{\beta'\in C_m}
\frac{\ii c_m}{2}\,F_{m\beta'}\,
Q_{\theta_{km},\beta'}(z),
\end{align}
which is not the original incident driving term but the finite remainder left by the singular solution after all kernel actions. For the regular unknowns $f_k^{(\mathrm N)}$, the self-action satisfies the identity $s\hat K_0 f_k^{(\mathrm N)}=f_k^{(\mathrm N)}$, so it can be moved to the left-hand side; defining $D_k(z):=s(z)+\ii c_k/2$ and writing $K_p^{(\theta)}:=s\hat K_\theta$ for the regularized kernel, the normal parts obey the coupled integral equations
\begin{align}
\label{eq:normal-system}
D_k f_k^{(\mathrm N)}
+ \sum_{m\neq k}\frac{\ii c_m}{2}K_p^{(\theta_{km})} f_m^{(\mathrm N)}
= h_k
\end{align}
for $k = 1, \cdots, M$,  where the boundary-value operator acts as
\begin{align}
\label{eq:Kp}
(K_p^{(\theta)}g)(x)
&= \frac{1}{2\pi\ii}\left[
\int_{-\infty}^{\infty}\dd y\,
\frac{g(y)}{y-r_+(x)-\iz}
\right.\notag\\
&\quad\quad\left.
- \int_{-\infty}^{\infty}\dd y\,
\frac{g(y)}{y-r_-(x)+\iz}
\right],
\end{align}
with $r_\pm(x):=x\cos\theta\pm s(x)\sin\theta$. For $x\in[-1,1]$ the poles $r_\pm(x)$ lie on the real axis, and the $\mp\iz$ prescriptions define the boundary values from the two sides; the operator therefore contains principal values and jump terms, and must be handled as a Cauchy-type integral even though $f_k^{(\mathrm N)}$ itself is regular.

Equation~\eqref{eq:normal-system} is a regular Fredholm system of the second kind on the real axis, which is solved numerically on a uniform momentum grid. The integrals of Eq.~\eqref{eq:Kp} are evaluated by piecewise-linear interpolation on the grid, with each segment integrated analytically; outside the grid, the solution decays as $1/y^2$, and the tails are integrated analytically as well. The resulting linear system is solved directly and cross-checked by fixed-point iteration, which agree to numerical precision. Figure~\ref{fig:normal} shows the resulting normal parts for the two-barrier configuration of Fig.~\ref{fig:moving-rule}; they are smooth on $(-1,1)$ and develop the integrable endpoint structure $s(x)\propto\sqrt{1-x^2}$ at $x=\pm1$, which is not a pole to be subtracted but the branch-point singularity of the kernel, harmless for the integral equations. The full solution is assembled as $f_k=f_k^{(\mathrm S)}+f_k^{(\mathrm N)}$, with the singular part given analytically by its poles and residues and the normal part stored on the grid.

\section{Atomic outgoing distribution}

\label{sec:outgoing}

To assemble scattering data from the projected functions, we return to the Lippmann--Schwinger formalism of Sec.~\ref{sec:model}. In standard scattering theory the transition operator is defined by $\ket{\Psi^{(+)}}=\ket{\Phi_0}+\hat G_0^{(+)}(E)\hat T(E)\ket{\Phi_0}$, equivalently $\hat T(E)=\hat V+\hat V\hat G_0^{(+)}(E)\hat T(E)$ \cite{taylor1972,newton1982}. Comparing this with the explicit solution of Eq.~\eqref{eq:psi}, in which the action of the potential is resolved into the rotated projected functions, we read off the momentum-space matrix elements of the transition operator,
\begin{align}
\label{eq:Tmatrix}
\bra{\boldsymbol{k}}\hat T(a^2+\iz)\ket{\boldsymbol{k}_0}
= \frac{1}{2\pi}\sum_{m=1}^{M} c_m\, f_m(\boldsymbol{k}\cdot\boldsymbol{t}_m),
\end{align}
with $\boldsymbol{t}_m:=(\cos\theta_m,\sin\theta_m)$ the unit vector along the $m$-th barrier: the $m$-th barrier scatters the wave with a weight that depends on the outgoing momentum only through its component along the barrier. The projected functions $f_m$ are therefore intermediate quantities, out of which the transition amplitudes are assembled, and not scattering amplitudes themselves.

The scattering is elastic, so the incident and final momenta share the magnitude $a$; by the scaling law of Sec.~\ref{sec:model} it suffices to evaluate the projected functions at unit energy, but we keep $a$ explicit in the reduction. Writing $\boldsymbol{k}_i=a(\cos\phi_0,\sin\phi_0)$ and $\boldsymbol{k}_f=a(\cos\varphi,\sin\varphi)$, with $\varphi$ the outgoing angle, the contraction with the barrier direction gives $\boldsymbol{k}_f\cdot\boldsymbol{t}_m=a\cos(\varphi-\theta_m)$, and the restriction of Eq.~\eqref{eq:Tmatrix} to the energy shell defines the angular transition amplitude
\begin{align}
\label{eq:Tshell}
T(\varphi,\phi_0)
:= \frac{1}{2\pi}\sum_{m=1}^{M} c_m\,
f_m\!\bigl(a\cos(\varphi-\theta_m)\bigr).
\end{align}
The contributions of the different barriers are added coherently at the level of amplitudes, before any modulus is taken.

The on-shell $S$ matrix is obtained from the standard reduction formula \cite{taylor1972,newton1982}
\begin{align}
\label{eq:Sfull}
\bra{\boldsymbol{k}_f}\hat S\ket{\boldsymbol{k}_i}
& = \delta^{(2)}(\boldsymbol{k}_f-\boldsymbol{k}_i)\notag\\
& - 2\pi\ii\,\delta(k_f^2-k_i^2)\,
\bra{\boldsymbol{k}_f}\hat T(a^2+\iz)\ket{\boldsymbol{k}_i}.
\end{align}
In the plane-wave normalization of Sec.~\ref{sec:model}, $\delta^{(2)}(\boldsymbol{k}_f-\boldsymbol{k}_i)=\frac{\delta(k_f-a)}{a}\,\delta_{2\pi}(\varphi-\phi_0)$ and $\delta(k_f^2-a^2)=\frac{\delta(k_f-a)}{2a}$, where $\delta_{2\pi}$ denotes the $2\pi$-periodic delta function on the circle of angles. Extracting the common radial factor $\delta(k_f-a)/a$, the angular $S$ matrix reads
\begin{align}
\label{eq:Sangular}
S(\varphi,\phi_0)
= \delta_{2\pi}(\varphi-\phi_0) - \ii\pi\, T(\varphi,\phi_0),
\end{align}
the first term describing the unscattered forward wave and the second the reflected and deflected channels generated by the barriers.

An infinitely long straight barrier scatters into sharply defined directions, so the angular $S$ matrix is a distribution rather than a function: it contains $\delta$ functions, and its modulus squared is not defined in the usual sense. The standard resolution is to work with finite-width wave packets \cite{newton1982}. For the discrete singular channels of the present model we use an equivalent regularization that is better adapted to the pole structure obtained in Sec.~\ref{sec:method}: each angular singularity is represented by a pole triple
\begin{align}
\label{eq:triple}
\mathfrak p := (A,\varphi_0,\lambda),\quad
\mathcal C_\varepsilon[\mathfrak p](\varphi)
:= \frac{A}{\varphi-\varphi_0-\ii\varepsilon\lambda},
\end{align}
where $A\in\mathbb{C}$ is the amplitude, $\varphi_0\in[0,2\pi)$ the position on the circle of angles, and $\lambda=\pm1$ records the side of the real axis from which the pole is approached: the pole of $\mathcal C_\varepsilon[\mathfrak p]$ lies at $\varphi_0+\ii\varepsilon\lambda$, above the axis for $\lambda=+1$ and below for $\lambda=-1$. Since $\varepsilon\lambda\to0$, only the side has observable content, and the normalization of $\lambda$ to $\pm1$ is a convention; angles are understood modulo $2\pi$. The unit term of Eq.~\eqref{eq:Sangular} admits the Sokhotski--Plemelj decomposition
\begin{align}
\label{eq:delta-pair}
\delta_{2\pi}(\varphi-\phi_0)
&= \lim_{\varepsilon\to0^+}\frac{1}{2\pi\ii}
\Bigl[\frac{1}{\varphi-\phi_0-\ii\varepsilon}
-\frac{1}{\varphi-\phi_0+\ii\varepsilon}\Bigr],
\end{align}
so the incident $\delta$ peak is the pair of triples
\begin{align}
\label{eq:incident}
\mathfrak d_+
:= \Bigl(\frac{1}{2\pi\ii},\ \phi_0,\ +1\Bigr),\quad
\mathfrak d_-
:= \Bigl(-\frac{1}{2\pi\ii},\ \phi_0,\ -1\Bigr),
\end{align}
which share the angle $\phi_0$ but approach it from opposite sides.

Each pole of the singular part generates two outgoing triples. The singular expansion of Sec.~\ref{sec:method} is obtained at unit energy, so we set $a=1$ when substituting it into $-\ii\pi T(\varphi,\phi_0)$; changing variables to $x=\cos(\varphi-\theta_m)$, a pole $F_{mq}\ket{\beta_{mq}}$ produces two triples, located at the two roots $\varphi=\theta_m\pm\beta_{mq}$ of $x=\cos\beta_{mq}$:
\begin{align}
\label{eq:scat-triples}
\mathfrak s_{mq}^{(-)}
&= \Bigl(\frac{\ii c_m F_{mq}}{2\sin\beta_{mq}},\;
[\theta_m+\beta_{mq}]_{2\pi},\; -1\Bigr),\\
\mathfrak s_{mq}^{(+)}
&= \Bigl(-\frac{\ii c_m F_{mq}}{2\sin\beta_{mq}},\;
[\theta_m-\beta_{mq}]_{2\pi},\; +1\Bigr),\label{eq:triple-plus}
\end{align}
where $[\cdot]_{2\pi}$ denotes reduction modulo $2\pi$ into $[0,2\pi)$. The coefficient $-1/2$ originates from $-\ii\pi T=-\tfrac{\ii}{2}\sum_m c_m f_m$, and the factor $1/\sin\beta_{mq}$ is the Jacobian of the change of variables at the two roots. The complete set of discrete singularities of the angular $S$ matrix is therefore the finite collection $\mathcal Q_S:=\{\mathfrak d_+,\mathfrak d_-\}\cup\{\mathfrak s_{mq}^{(\pm)}\}_{m,q}$.

The continuous part of the $S$ matrix, generated by the normal parts through $S^{(N)}(\varphi,\phi_0):=-\tfrac{\ii}{2}\sum_m c_m f_m^{(\mathrm N)}\!\bigl(\cos(\varphi-\theta_m)\bigr)$, carries no discrete weight in the limit below. To extract the weights of the singular channels, regularize the angular measure by
\begin{align}
\label{eq:measure}
\dd\mu_\varepsilon(\varphi)
:= \frac{\varepsilon}{\pi}\,
\Bigl|S_\varepsilon^{(S)}(\varphi,\phi_0)+S^{(N)}(\varphi,\phi_0)\Bigr|^2\dd\varphi,
\end{align}
with $S_\varepsilon^{(S)}:=\sum_{\mathfrak p\in\mathcal Q_S}\mathcal C_\varepsilon[\mathfrak p]$ the regularized singular part. The prefactor $\varepsilon/\pi$ compensates the $1/\varepsilon$ divergence of $\lvert\mathcal C_\varepsilon[\mathfrak p]\rvert^2$, leaving each isolated pole with the finite weight $|A|^2$. The normal part and its cross terms with the singular part contribute $O(\varepsilon)$ and $O(\varepsilon\lvert\log\varepsilon\rvert)$, respectively, and vanish in the limit $\varepsilon\to0^+$; the atomic weights are therefore independent of the normal parts, which enter only the continuous part of the distribution.

Two triples $\mathfrak p_i=(A_i,\varphi_i,\lambda_i)$ and $\mathfrak p_j=(A_j,\varphi_j,\lambda_j)$ contribute a finite overlap only when they occupy the same angle and approach it from the same side,
\begin{align}
\label{eq:overlap}
\lim_{\varepsilon\to0^+}\frac{\varepsilon}{\pi}\int_0^{2\pi}\dd\varphi\,
&\mathcal C_\varepsilon[\mathfrak p_i](\varphi)\,
\overline{\mathcal C_\varepsilon[\mathfrak p_j](\varphi)}\notag\\
&= A_iA_j^*\;\delta_{\varphi_i,\varphi_j}\,\delta_{\lambda_i,\lambda_j},
\end{align}
the same-side condition keeping the two regularized poles on the same side of the axis so that the overlapping Lorentzians leave the finite product of amplitudes. Grouping $\mathcal Q_S$ by common angle $\varphi_r$ and by side, with $I_r^\pm:=\{\mathfrak p\in\mathcal Q_S:\varphi_0=\varphi_r,\ \lambda=\pm1\}$ the triples at $\varphi_r$ approaching from the side $\pm1$, the weak limit of the measure in Eq.~\eqref{eq:measure} is an atomic measure with unnormalized weights
\begin{align}
\label{eq:Wr}
\widetilde W_r
:= \Bigl\lvert\sum_{I_r^+}A\Bigr\rvert^2
+\Bigl\lvert\sum_{I_r^-}A\Bigr\rvert^2,
\end{align}
the coherent addition within each side preserving the interference between different scattering paths that end at the same angle from the same side. Unitarity fixes the overall scale of these weights through the incident probability. With $\kappa_j$ the coefficient carried by the $j$-th incident component, $2\pi^2\sum_r\widetilde W_r=\sum_j\lvert\kappa_j\rvert^2$, the sum running over the incident components. For the single unit-amplitude plane wave of this section $\sum_j\lvert\kappa_j\rvert^2=1$, so that $\sum_r\widetilde W_r=1/(2\pi^2)$ and the normalized probabilities $W_r:=2\pi^2\widetilde W_r$ sum to one, $\sum_r W_r=1$. A multi-component incident state is treated in the same way, its total weight $\sum_j\lvert\kappa_j\rvert^2$ replacing the unit scale. The outgoing distribution is the probability measure
\begin{align}
\label{eq:atomic}
\rho(\varphi)
= \sum_r W_r\,\delta(\varphi-\varphi_r),
\end{align}

The overall scale is fixed by the incident pair: in the absence of scattering the distribution consists of the single peak at $\phi_0$, whose unnormalized weight is $\widetilde W_{\mathrm{in}}=1/(4\pi^2)+1/(4\pi^2)=1/(2\pi^2)$, the sum of $|A|^2=1/(4\pi^2)$ from each of $\mathfrak d_\pm$. The normalization $W_r=2\pi^2\widetilde W_r$ therefore assigns unit forward weight to free propagation, and the closed form $2\pi^2\sum_r\widetilde W_r=1$ is satisfied by the atomic distribution generated from the singular parts. At the forward angle $\varphi_r=\phi_0$, the incident pair interferes with every scattering triple sharing its angle and side: the forward peak is the result of the complete $S$ matrix, not the sum of the incident and scattered intensities. Once the singular linear system of Sec.~\ref{sec:method} is solved, the triples of $\mathcal Q_S$ are at hand and the atomic weights of Eq.~\eqref{eq:atomic} are evaluated directly, with no further input from the normal parts; the resulting outgoing angular distribution for the two-barrier example is shown in Fig.~\ref{fig:triples}(b). This completes the calculation of the scattering model formulated in Sec.~\ref{sec:model}: the projected integral equations are resolved into a finite singular system and a regular remainder, and the outgoing distribution follows from the singular part alone.

\begin{figure}[t]
\includegraphics[width=0.9\columnwidth]{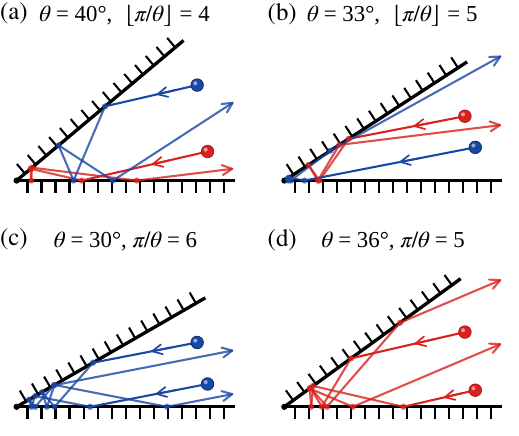}
\caption{Four typical classical scatterings.}
\label{fig:galperin-classical}
\end{figure}

\section{Sensitivity to the incident angle}
\label{sec:sensitivity}

Classical scattering often depends singularly on the initial conditions: the deflection function is stationary at rainbow angles, the intensity diverges on caustics, and the outgoing directions can jump at critical values. Where this classical singularity information goes in the quantum theory is a long-standing question: Wigner's threshold law gives a square-root cusp in the cross section when a channel opens \cite{wigner1948threshold}; the semiclassical analysis of Ford and Wheeler shows that the classical deflection function is already encoded in the quantum phase shifts and their rainbow oscillations \cite{ford1959semiclassical,ford1959application,berry1966rainbow}; and catastrophe optics replaces the classical divergence by a universal diffraction pattern, a wave crossing a caustic acquiring a Maslov phase \cite{berry1980catastrophe,marcuse1976caustic,nye1974dislocations}. The common picture is that the singularity information is not lost, but moves from the intensity distribution into phases and oscillations. These results are largely semiclassical, and their classical background is usually a (piecewise) smooth dynamics. The model solved in this work---a plane wave scattered by two intersecting $\delta$-potential barriers [Fig.~\ref{fig:schematic}(b)]---has instead a nonsmooth classical counterpart: a particle bouncing elastically in a hard wedge of opening angle $\theta$, where the first wall hit fixes the entire collision sequence and hence the escape channel, so that the channel assignment is a piecewise function of the incident angle. Its jumps at the classical fold points are the sensitivity of interest---discrete and geometric, not an exponential dynamical instability; and, the model being exactly solvable, both the classical fold structure and its fate in the quantum scattering can be followed without approximation.

Before turning to the quantum scattering problem, we first recall the classical scattering. Classical scattering falls into the four cases shown in Fig.~\ref{fig:galperin-classical}. In cases (a) and (b) fold points are present, and the outgoing angle jumps as the incident angle is varied, the jump depending on which wall is hit first; in cases (c) and (d), by contrast, the outgoing angle varies continuously with the incident angle. The blue and red trajectories indicate whether the number of collisions is even or odd, respectively.

Quantum mechanically this picture is lost: by the uncertainty principle, a wave of definite momentum is infinitely extended in the transverse direction, so which wall is struck first is not a well-posed question, and the two classical branches must be superposed coherently. Figure~\ref{fig:classical-vs-quantum} shows what this coherence produces: the single classical channel of unit weight is replaced, at every incident angle, by two channels of comparable weight lying on the two classical leaves. The scattering problem must therefore be solved exactly.

\begin{figure}[t]
\centering
\includegraphics[width=\columnwidth]{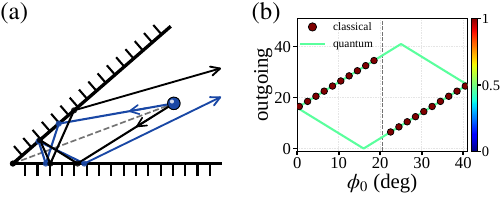}
\caption{Classical versus quantum outgoing channels for $\theta=41^\circ$: (a) the two classical trajectories; (b) the classical single channel (points, unit weight) and the two quantum channels at $c=10$ (lines, color coding $W_r$).}
\label{fig:classical-vs-quantum}
\end{figure}

\begin{figure}[t]
\centering
\includegraphics[width=\columnwidth]{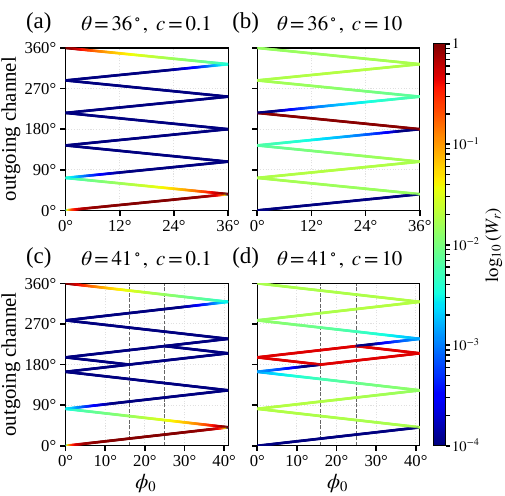}
\caption{Outgoing channels versus incident angle $\phi_0$. Each
channel is drawn as a point at its outgoing angle $\varphi$ (measured
anticlockwise, modulo $2\pi$) and colored by its weight $W_r$ of
Eq.~\eqref{eq:atomic} on a $\log_{10}$ scale, weights below $10^{-4}$
being shown in the darkest blue. Panels (a),(b) are for the rational
angle $\theta=\pi/5$ and (c),(d) for the generic angle $\theta=41^\circ$,
with (a),(c) at $c=0.1$ and (b),(d) at $c=10$. At $\pi/5$ the number of
channels stays at $10$ for all $\phi_0$; at $41^\circ$ a single pair of
them is created and destroyed at the classical fold points $\phi_0=r$ and
$\phi_0=\theta-r$, marked by the dashed lines at $16^\circ$ and
$25^\circ$. In panels (c) and (d) the new channel is seen to open from
zero weight---it starts in the darkest blue---and to grow away from the
fold, so that the probabilities remain continuous through the event even
though the channel structure changes. These positions are fixed by
geometry alone and do not depend on $c$.}
\label{fig:channel-map}
\end{figure}

\subsection{Smooth opening of the ghost channel}

The equations to be solved are those set up above: the projected functions obey Eq.~\eqref{eq:Kform}, their singular parts are generated from the seed poles of Eq.~\eqref{eq:seed} by the moving rule Eq.~\eqref{eq:rules}, and closing that rule as in Eq.~\eqref{eq:closure} reduces the singular sector to the linear system Eq.~\eqref{eq:linear}. For the two barriers of Fig.~\ref{fig:schematic}(b), whose closure is illustrated in Fig.~\ref{fig:moving-rule}(b), this system reads explicitly
\begin{align}
\Bigl(S_1+\frac{\ii c_1}{2}I\Bigr)G_1+\frac{\ii c_2}{2}K_{12}G_2&=F_1^{(0)},\notag\\[2pt]
\frac{\ii c_1}{2}K_{21}G_1+\Bigl(S_2+\frac{\ii c_2}{2}I\Bigr)G_2&=F_2^{(0)},
\label{eq:M2linear}
\end{align}
where $S_1=\operatorname{diag}(|\sin\beta_{1q}|)$ and $S_2=\operatorname{diag}(|\sin\beta_{2q}|)$ are the diagonal matrices of closure angles of the two projected functions and $G_{1q}=F_{1q}/|\sin\beta_{1q}|$, $G_{2q}=F_{2q}/|\sin\beta_{2q}|$ are the rescaled unknowns, $K_{12}$ and $K_{21}$ are the propagation matrices between the two closures, read off the moving rule with $\theta_{12}=\theta_{21}=\theta$ (entries $0$ and $\pm1$, the diagonal blocks being the identities), and the right-hand sides contain only the seed pairs: with $\theta_{01}=\phi_0$ and $\theta_{02}=\theta-\phi_0$, $F_1^{(0)}(\phi_0)=+1/(2\pi\ii)$, $F_1^{(0)}(2\pi-\phi_0)=-1/(2\pi\ii)$ and $F_2^{(0)}(\theta-\phi_0)=+1/(2\pi\ii)$, $F_2^{(0)}(2\pi-\theta+\phi_0)=-1/(2\pi\ii)$. Solving this system for $G_1$ and $G_2$ and restoring $F_{1q}=|\sin\beta_{1q}|G_{1q}$ and $F_{2q}=|\sin\beta_{2q}|G_{2q}$ determines the singular parts of both projected functions completely.

The weights plotted below follow from the coefficients determined here through the chain of Secs.~\ref{sec:method} and~\ref{sec:outgoing}: restoring $F_{kq}=|\sin\beta_{kq}|G_{kq}$ converts the solution of Eq.~\eqref{eq:linear} into the coefficients of the singular expansion, each pole then generates the two $S$-matrix triples of Eq.~\eqref{eq:scat-triples}, and the coherent sums of their amplitudes define the unnormalized weights of Eq.~\eqref{eq:Wr}, which unitarity normalizes into the outgoing probabilities of Eq.~\eqref{eq:atomic}.

Figure~\ref{fig:channel-map} displays the channel structure as a function of $\phi_0$. For the generic angle $\theta=41^\circ$ [panels~(c) and~(d)] the channels are created and destroyed as $\phi_0$ is varied, the events occurring at the classical fold points $\phi_0=r$ and $\phi_0=\theta-r$, with $r=\pi-N\theta$ and $N=\lfloor\pi/\theta\rfloor$; for the rational angle $\theta=\pi/5$ [panels~(a) and~(b)] their number is instead constant. The opening is soft: a channel is born with vanishing weight and grows with $\phi_0$, so that the probabilities stay continuous while the carrier structure changes. Here and below, the channel born at a fold---existing only on one side of it, with a weight that vanishes at the fold---is called the ghost channel. The event positions are fixed by geometry alone; the strength $c$ affects only the weights.

\begin{figure}[t]
\centering
\includegraphics[width=\columnwidth]{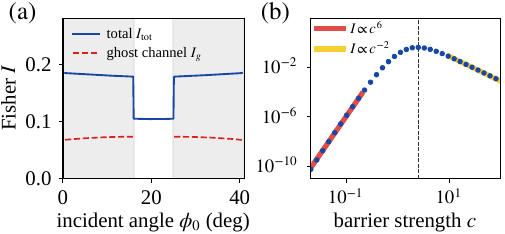}
\caption{Fisher information of the outgoing channel label for $\theta=41^\circ$: (a) total and ghost-channel values versus incident angle at $c=10$; (b) ghost-channel value versus barrier strength, with the asymptotic slopes.}
\label{fig:fisher}
\end{figure}

The soft opening has a quantitative counterpart, best expressed through the Fisher information of the distribution with respect to the incident angle. Since the outgoing state is an atomic measure and no substructure within a channel is resolved, the relevant quantity is the Fisher information of the channel label,
\begin{equation}
I(\phi_0)=\sum_r\frac{(\partial_{\phi_0}W_r)^2}{W_r},
\label{eq:fisher}
\end{equation}
the sum running over the outgoing channels. A discontinuous creation of a channel would produce a step in $I$; the soft opening instead gives $W\propto C(c)\Delta^2$ near a fold, so that the ghost channel contributes $I_{\mathrm g}=(\partial_\Delta W)^2/W=4C(c)=O(1)$, with no critical enhancement. The step that the total $I$ nevertheless shows at a fold is therefore carried entirely by the newly opened channel, the channels present on both sides contributing continuously. Figure~\ref{fig:fisher} displays both the angular dependence at $c=10$ and the dependence of the ghost-channel value on the barrier strength: at strong coupling $I\propto c^{-2}$, whereas at weak coupling the exponent is governed by the number of virtual collisions needed to build the new channel, as the explicit solution below makes clear. The curve has a maximum $I^*\simeq0.41$ near $c^*\simeq2.5$, and the fold positions themselves are fixed by geometry.

The new channel therefore leaves only a finite trace in the distribution: it contributes a finite Fisher information, and no critical enhancement accompanies its birth. If the scattering is observed through the outgoing probabilities alone, the initial-condition sensitivity of the classical problem---the discrete change of the escape channel at a fold---appears to have disappeared altogether: the distribution stays smooth, and the only remnant of the event is the finite step of Eq.~\eqref{eq:fisher} in $I$.

\begin{figure}[t]
\centering
\includegraphics[width=\columnwidth]{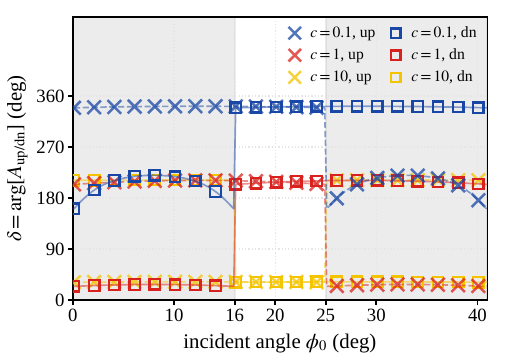}
\caption{Scattering phase shifts of the two dominant channels for $\theta=41^\circ$ at $c=0.1$, $1$ and $10$; the shaded and white intervals mark regions of different channel number, whose boundaries are the classical fold points.}
\label{fig:phase-shift}
\end{figure}

The sensitivity is not lost, however: it is transferred to the phases of the channels that continue the classical branches. Following the two dominant channels of the window $(\pi,\pi+\theta)$, which are the quantum counterparts of the two classical leaves, and referring their amplitudes to a common half-plane---multiplying by $+\ii$ for $\lambda=+1$ and $-\ii$ for $\lambda=-1$---the phase shifts so obtained are shown in Fig.~\ref{fig:phase-shift}; the two amplitudes are written $A_{\mathrm{up}}$ and $A_{\mathrm{dn}}$, the labels taken from the classical leaf that each channel continues. Each phase stays on a plateau within the interval delimited by the classical fold points and jumps by $\pi$ when a fold is crossed, the plateau value recording the collision sequence of the corresponding chain, while the fold positions are fixed by geometry and unmoved by $c$. This is the structural counterpart of a well-established theme in the theory of singularities of wave fields: the caustic at which the classical intensity diverges is crossed by a Maslov phase jump \cite{marcuse1976caustic,berry1980catastrophe}, the stationary points of the classical deflection function reappear as the oscillations of the quantum cross section \cite{ford1959semiclassical,berry1966rainbow}, and the singularities of the classical trajectories survive in the phases of the $S$-matrix elements \cite{miller1970classicalS}.

The manner of the transfer is, however, different from that of these precedents. There the classical singularity leaves a threshold structure in the distribution itself, with a form characteristic of its type: the opening of a channel produces the square-root threshold behavior $\sigma\propto(E-E_{\mathrm{th}})^{1/2}$ of Wigner's law, whose slope diverges at the threshold \cite{wigner1948threshold}, while at a fold caustic---a rainbow---the classical intensity diverges as $|\delta|^{-1/2}$ in the distance $\delta$ from the caustic and the wave field is regularized by an Airy-type pattern with its characteristic enhancement \cite{berry1966rainbow,berry1980catastrophe}. Here no such structure appears. The new channel opens smoothly, its weight vanishing quadratically,
\begin{equation}
W\propto C(c)\,\Delta^{2}\quad(\Delta\to0),
\label{eq:soft-open}
\end{equation}
so that its Fisher information stays finite and constant through the event,
\begin{equation}
I_{\mathrm g}=\frac{(\partial_\Delta W)^{2}}{W}=4C(c) \in O(1),
\label{eq:fisher-ghost}
\end{equation}
and neither a cusp nor an enhancement marks the distribution. The difference can be traced to the nature of the corresponding classical event: at a fold the outgoing direction reaches the wall direction, rather than a stationary point of the deflection function being attained, so that the classical singularity is a boundary effect and its transfer to the quantum problem is carried by the phase alone. The explicit form of that phase follows from the closed-form solution given below.

\subsection{Analytic calculation}

The system is in fact solvable in closed form. Its singular sector splits into the two disjoint chains visible in Fig.~\ref{fig:moving-rule}(b), each generated from one of the two seeds of negative sign, $\beta_1=-\phi_0$ and $\beta_1'=\phi_0-\theta$, by repeated application of the moving rule until the death region is reached. The two projected functions alternate along a chain: with $p_m=((-1)^m+3)/2$, which is $1$ for odd $m$ and $2$ for even $m$, the $m$-th node of the first chain belongs to $f_{p_m}$ and that of the second to $f_{p_{m+1}}$. Since each row of Eq.~\eqref{eq:M2linear} couples a node only to its unique predecessor, a single forward recursion determines all coefficients: the first chain, with angles $\beta_m=-\phi_0-(m-1)\theta$, and the second, with angles $\beta_m'=\phi_0-m\theta$, carry
\begin{align}
G(\beta_m)&=\frac{1}{2\pi\ii}\frac{2}{\ii c_2}\prod_{n=1}^{m}\frac{-\ii c_{p_{n-1}}/2}{|\sin\beta_n|+\ii c_{p_n}/2},\notag\\[2pt]
G(\beta_m')&=\frac{1}{2\pi\ii}\frac{2}{\ii c_1}\prod_{n=1}^{m}\frac{-\ii c_{p_n}/2}{|\sin\beta_n'|+\ii c_{p_{n+1}}/2},
\label{eq:M2chains}
\end{align}
for $m=1,\dots,M_1$ and $m=1,\dots,M_2$, where the prefactors $2/(\ii c_2)$ and $2/(\ii c_1)$ cancel the $n=1$ factors and the lengths of the two chains, fixed by the death region, are
\begin{equation}
M_1=\Bigl\lceil\frac{\pi-\phi_0}{\theta}\Bigr\rceil,\quad
M_2=\Bigl\lceil\frac{\pi+\phi_0}{\theta}\Bigr\rceil-1 .
\label{eq:M2lengths}
\end{equation}
Each chain additionally terminates on one node of the other function, namely the second seed of that function, $\theta-\phi_0\in C_2$ for the first chain and $\phi_0\in C_1$ for the second; it is reached from the chain head through the $+\theta$ branch rather than along the chain, and the two coefficients are
\begin{align}
&G(\theta-\phi_0)=\frac{1}{2\pi\ii}\frac{|\sin\phi_0|}{\bigl(|\sin(\theta-\phi_0)|+\frac{\ii c_2}{2}\bigr)\bigl(|\sin\phi_0|+\frac{\ii c_1}{2}\bigr)},\notag\\[2pt]
&G(\phi_0)=\frac{1}{2\pi\ii}\frac{|\sin(\theta-\phi_0)|}{\bigl(|\sin\phi_0|+\frac{\ii c_1}{2}\bigr)\bigl(|\sin(\theta-\phi_0)|+\frac{\ii c_2}{2}\bigr)}.
\label{eq:M2nodes}
\end{align}
The two chains together exhaust the closure, so Eqs.~\eqref{eq:M2chains}, \eqref{eq:M2lengths} and~\eqref{eq:M2nodes} solve Eq.~\eqref{eq:M2linear} completely; direct substitution confirms the solution.

The closed forms of Eqs.~\eqref{eq:M2chains}, \eqref{eq:M2lengths} and~\eqref{eq:M2nodes} also settle the two features anticipated above; the calculation is elementary but somewhat lengthy. We treat the case at hand, for which $\lfloor\pi/\theta\rfloor$ is even, write $\phi_r=\pi-\lfloor\pi/\theta\rfloor\theta$ for the fold point, and consider $\phi_0=\phi_r-\Delta$ with $\Delta\to0^{+}$; the relevant node of the first chain, $\beta_M$, then tends to $\pi$, with $|\sin\beta_M|=\sin\Delta$. The ghost channel, at the outgoing angle $\pi-\Delta$, exists only for $\Delta>0$; both of its poles lie on the same side of the real axis, so that the two amplitudes of Eq.~\eqref{eq:scat-triples} add coherently,
\begin{equation}
A_{\mathrm g}=\frac{\ii c_1}{2}\,G(\beta_M)+\frac{\ii c_2}{2}\,G(\beta_{M-1}),
\label{eq:ghost-amp}
\end{equation}
$G(\beta_{M-1})$ being the preceding node of that chain. Consecutive nodes are related by Eq.~\eqref{eq:M2chains},
\begin{equation}
G(\beta_M)=-\frac{\ii c_2/2}{|\sin\beta_M|+\ii c_1/2}\,G(\beta_{M-1}),
\label{eq:last-link}
\end{equation}
which makes the two terms of Eq.~\eqref{eq:ghost-amp} cancel except for the numerator, $\frac{\ii c_2}{2}G(\beta_{M-1})|\sin\beta_M|/(|\sin\beta_M|+\ii c_1/2)$, so that
\begin{equation}
A_{\mathrm g}=-\,G(\beta_M)\sin\beta_M\;\simeq\;-\,G(\beta_M)\Delta .
\label{eq:ghost-amp-limit}
\end{equation}
The amplitude of the ghost channel is therefore linear in $\Delta$, and through the normalization of Eq.~\eqref{eq:atomic}---recall that the amplitude sums of Eq.~\eqref{eq:Wr} are the unnormalized weights, $W_r=2\pi^{2}\widetilde W_r$---its probability is
\begin{equation}
W_{\mathrm g}=2\pi^{2}\bigl|G(\beta_M)\bigr|^{2}\sin^{2}\beta_M\simeq C(c)\,\Delta^{2},
\label{eq:ghost-weight}
\end{equation}
which is the general soft form of Eq.~\eqref{eq:soft-open}; the coefficient left unspecified there is given explicitly by the last node of the chain,
\begin{equation}
C(c)=2\pi^{2}\bigl|G(\beta_M)\bigr|^{2}.
\label{eq:ghost-coef}
\end{equation} That coefficient carries the whole $c$ dependence: for equal barrier strengths, $c_1=c_2=c$, Eq.~\eqref{eq:M2chains} gives $C(c)\propto c^{2(M-1)}$ as $c\to0$ and $C(c)\propto c^{-2}$ as $c\to\infty$, so that the strong-coupling law is the universal $c^{-2}$ of Fig.~\ref{fig:fisher}, while the weak-coupling exponent $2(M-1)$ counts the virtual collisions that build the new channel---for $\theta=41^\circ$ it is six, as observed.

The same two coefficients carry the phase. For the channel that continues the classical branch, the two members of Eqs.~\eqref{eq:scat-triples} and~\eqref{eq:triple-plus} take over on the two sides of the fold. Writing the pole denominators of Eq.~\eqref{eq:scat-triples} as limits, the amplitude on each side is
\begin{align}
A_{+}&=\lim_{\epsilon\to0^{+}}\frac{-\ii c_1/2}{\ii\epsilon}\;G(\beta_M)\quad(\Delta>0),\notag\\[2pt]
A_{-}&=-\,\lim_{\epsilon\to0^{+}}\frac{\ii c_2/2}{\ii\epsilon}\;G(\beta_{M-1})\quad(\Delta<0),
\label{eq:branch-amps}
\end{align}
the common divergent factor $1/(\ii\epsilon)$ cancelling in their two-sided ratio, where Eq.~\eqref{eq:last-link} eliminates $\beta_{M-1}$,
\begin{equation}
\frac{\lim_{\Delta\to0^{-}}A_{-}}{\lim_{\Delta\to0^{+}}A_{+}}
=\frac{c_2\,G(\beta_{M-1})}{c_1\,G(\beta_M)}
=-1,
\label{eq:amp-ratio}
\end{equation}
The modulus is therefore continuous across the fold, while the phase differs by $\pi$ on the two sides---the relative phase of Fig.~\ref{fig:phase-shift}. Both features anticipated above, the smooth opening with its scaling and the $\pi$ phase, thus follow from the same two ingredients: the last link of the chain, Eq.~\eqref{eq:last-link}, and the vanishing of $\sin\beta_M$ at the fold.

The chain lengths of Eq.~\eqref{eq:M2lengths} also account for the rational angles. Only the ceilings depend on $\phi_0$: the first length decreases by one when $\phi_0$ passes $\phi_r$ and the second increases by one when it passes $\theta-\phi_r$, so that each fold is exactly where a chain gains or loses a node, and with it one channel. At $\theta=\pi/N$ one has $\phi_r=0$, both folds degenerate onto the barrier directions, and $M_1=M_2=N$ for every $\phi_0$ in the wedge: the closure then contains the same number of poles at every incident angle, and the number of channels stays fixed at $2N$---ten at $\theta=\pi/5$---as seen in Fig.~\ref{fig:channel-map}(a),(b).

\subsection{Summary}

To summarize this section: the projected integral equations of Secs.~\ref{sec:model}--\ref{sec:outgoing}, which reduce the scattering problem to a finite linear system, are solved here in closed form for the two-barrier case, and the exact solution confirms, one by one, the features anticipated from the classical analysis. The number of outgoing channels is not fixed: at a generic angle it changes by one whenever a classical fold is crossed, whereas at the rational angles $\theta=\pi/N$ it stays constant. The new channel opens softly---its weight grows quadratically from zero, Eq.~\eqref{eq:ghost-weight}, and its Fisher information stays finite and constant, Eq.~\eqref{eq:fisher-ghost}---so that the initial-condition sensitivity of the classical scattering leaves no singular trace in the quantum probabilities, which remain smooth through the fold. The sensitivity is not lost, however: it is transferred to the phases, where it survives as the relative phase of $\pi$ between the two channels that continue the classical branches, Eqs.~\eqref{eq:branch-amps} and~\eqref{eq:amp-ratio}. The classical fold structure is thus reproduced in the quantum problem with the same information content, carried by the channel topology and by the phases rather than by the distribution. The same framework is applied to the Galperin billiards in Sec.~\ref{sec:outlook} (Fig.~\ref{fig:galperin-schematic}).

\begin{figure}[t]
\centering
\includegraphics[width=\columnwidth]{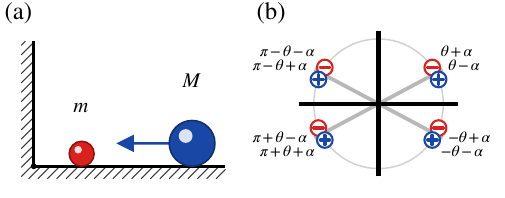}
\caption{(a) The Galperin billiards; (b) its angular representation, with the two $\delta$-barrier lines and the poles of the scattering state.}
\label{fig:galperin-schematic}
\end{figure}

\section{Application to the quantum Galperin billiards}
\label{sec:outlook}

The Galperin billiards \cite{galperin2003} is remarkable in that the number of elastic collisions in a purely mechanical process records the numerical value of $\pi$: a heavy ball of mass $M$ is sent towards a light ball of mass $m$ at rest against a hard wall, and the collisions that occur before the heavy ball reverses its motion add up to $\lfloor\pi/\theta\rfloor$, with $\theta=\arctan\sqrt{m/M}$; for a mass ratio $M/m=b^{2N}$, with $b$ an even integer, this count reproduces the digits of $\pi$ [Fig.~\ref{fig:galperin-schematic}(a)]. The classical problem has since been solved in closed form, including a detailed analysis of the error with which the digits are obtained \cite{aretxabaleta2020math}.

To make the correspondence with the two-particle problem explicit, consider two particles of masses $M$ and $m$ moving on a half-line bounded by a hard wall at the origin, with positions $x_1,x_2\ge0$ and a contact interaction $V=c\,\delta(x_1-x_2)$ between them. Rescaling the coordinates as $X=\sqrt M\,x_1$ and $Y=\sqrt m\,x_2$ makes the kinetic energy isotropic---it becomes that of a free particle of unit mass in the plane---and maps the allowed region onto the first quadrant. The wall then acts as a pair of mirror axes, across which the wave function is antisymmetric. The rescaling acts on the interaction as well: the relative coordinate is a single linear combination of the new variables, and using $\delta(\lambda u)=\delta(u)/|\lambda|$, the $\delta$ function becomes one of the normal distance from the line $x_1=x_2$,
\begin{equation}
\delta(x_1-x_2)=\sqrt{\frac{Mm}{M+m}}\,\delta(\hat n\cdot\mathbf r),\,\, \hat n=\frac{(\sqrt m,-\sqrt M)}{\sqrt{M+m}},
\label{eq:contact-delta}
\end{equation}
so that the contact interaction is a $\delta$-potential barrier along that line, that is, along the direction $\theta=\arctan\sqrt{m/M}$. A plane wave $e^{\ii(k_XX+k_YY)}$ of the rescaled problem corresponds to the two particles carrying momenta $p_1=\sqrt M\,k_X$ and $p_2=\sqrt m\,k_Y$, so that the incident angle is fixed by their ratio,
\begin{equation}
\tan\alpha=\sqrt{\frac{M}{m}}\,\frac{p_2}{p_1}=\sqrt{\frac{m}{M}}\,\frac{v_2}{v_1},
\label{eq:incident-angle}
\end{equation}
a particle at rest corresponding to incidence along the $X$ axis.

\subsection{Quantum Galperin model}

Quantum versions of the Galperin billiards have been discussed in the literature. Brown \cite{brown2020quantum} established an exact isomorphism between the bouncing billiards and Grover's algorithm for quantum search, and a direct quantum-mechanical treatment was given by Cai and Zhang \cite{cai2022}, who argued that the information contained in the classical collision count is transferred to the scattering phase shifts and read the digits of $\pi$ off from those shifts. This model, however, possesses no continuous symmetry, and its barriers are supported on entire rays rather than on compact sets, so that neither a conserved angular momentum nor a separable angular basis is available. To complete the calculation, they resorted to an adiabatic approximation and to an angular quantization of the problem.

These approximations are unnecessary as the model admits an exact solution, which is obtained by the method developed in Secs.~\ref{sec:model}--\ref{sec:outgoing}. To this end, we use the method of images: scattering off the wall is converted into a symmetry of the wave function, and the problem to be solved becomes that of two symmetric intersecting $\delta$-potentials in a space with reflection antisymmetry about the $X$ and $Y$ axes. The initial momentum of the incident wave is then converted, under this symmetry, into four incident components $\phi_j\in\{\alpha,\ \pi+\alpha,\ -\alpha,\ \pi-\alpha\}$, whose coefficients $\kappa_j=+1,+1,-1,-1$ are fixed by the reflection antisymmetry, and each of them into a pair of initial singularities---eight in all, shown in panel (b) of Fig.~\ref{fig:galperin-schematic}.

For this configuration the projected equations of Secs.~\ref{sec:model}--\ref{sec:method} simplify. The barriers lie at $\theta_1=\theta$ and $\theta_2=-\theta$, that is, $\pi-\theta$ modulo $\pi$, with equal strengths $c$, the incident direction is the angle $\alpha$ of Eq.~\eqref{eq:incident-angle}, and the coupled system \eqref{eq:Kform}--\eqref{eq:linear} therefore involves two functions whose poles are generated by the moving rule. The reflection antisymmetry of the state makes the two solutions opposite, $G_2=-G_1$, so that the coupled equations collapse onto a single unknown function; since the moving rule closes that function on a finite set of poles (Appendix~\ref{app:closure}), the equation becomes a finite linear system. The model is thus solved exactly, with no approximation of any kind.

Written for the barrier at $\theta$, the system \eqref{eq:linear} reduces to a single linear system,
\begin{equation}
\Bigl[S_1+\frac{\ii c}{2}\bigl(I-K_{12}\bigr)\Bigr]G_1=F_1^{(0)},
\label{eq:galperin-single-f}
\end{equation}
with $\theta_{12}=2\theta$; here $S_1$, $K_{12}$ and $F_1^{(0)}$ are the matrices and the seed of Eqs.~\eqref{eq:linear}, \eqref{eq:Kmat} and~\eqref{eq:seed}, built on the finite closure of Appendix~\ref{app:closure}.

Throughout this discussion $\alpha$ is taken small, as in the original setting: the classical Galperin problem has the light ball at rest, $v_2=0$, which by Eq.~\eqref{eq:incident-angle} is the case $\alpha=0$. Because the moving rule generates a finite closure (Appendix~\ref{app:closure}), Eq.~\eqref{eq:galperin-single-f} is a finite linear system and can be solved numerically as it stands, without further approximation; its solution gives the poles of the scattering state, and the outgoing map of Sec.~\ref{sec:outgoing} turns them into outgoing atoms.

Solving it gives the outgoing data as in Sec.~\ref{sec:outgoing}, with the normalization $2\pi^2\sum_r\widetilde W_r=\sum_j|\kappa_j|^2=4$ for the four-component incident state. The chain length is fixed by the integer $\lfloor\pi/\theta\rfloor$ that counts the classical collisions, and it is in the phases of the resulting amplitudes that this information reappears, in the closed forms given in the next subsection.

\begin{figure}[t]
\centering
\includegraphics[width=\columnwidth]{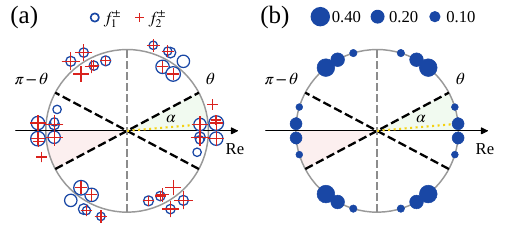}
\caption{(a) Poles generated by the moving rule on the two barriers and (b) the resulting outgoing atoms, for $\theta=28^\circ$, $\alpha=5^\circ$ and $c=1$.}
\label{fig:galperin-solution}
\end{figure}

Figure~\ref{fig:galperin-solution} shows a representative solution, for $\theta=28^\circ$, $\alpha=5^\circ$ and $c=1$. Panel (a) gives the poles generated by the moving rule on the two barriers, and panel (b) the outgoing atoms obtained from them. The distribution carries the symmetry of the state: the atoms form quadruplets $\{\varphi,\pi-\varphi,\pi+\varphi,-\varphi\}$, and no weight leaves in the four directions of the incident wave, $\pm\alpha$ and $\pi\pm\alpha$, which is the fingerprint of the reflection antisymmetry. The weights satisfy $2\pi^2\sum_r\widetilde W_r=4$ to machine precision, as required by the four-component incident state.

\subsection{Scattering phase shifts}

Throughout this subsection $n=\lfloor\pi/\theta\rfloor$ is the number of steps of $2\theta$ that fit in $\pi$, and
\begin{equation}
r=\pi-n\theta
\label{eq:galperin-r}
\end{equation}
is the remainder, $0\le r<\theta$; it is the distance from the last step to the death region.

Equation~\eqref{eq:galperin-single-f} has a structure that can be read off from the matrix itself. Its singular sector is a set of isolated poles, and the moving rule advances a pole by exactly $2\theta$; beginning with a pole that lies in the double region, the move can be repeated, and the angles so generated form a sequence
\begin{equation}
\beta_0\;\longrightarrow\;\beta_0-2\theta\;\longrightarrow\;\cdots\;\longrightarrow\;\beta_0-2\theta\Bigl\lfloor\frac{\beta_0}{2\theta}\Bigr\rfloor,
\label{eq:galperin-chainseq}
\end{equation}
which stops when it falls into the death region. We call such a sequence a chain, and its existence is a property of the equation alone: along a chain every pole is coupled only to its predecessor, so the matrix is bidiagonal and a single forward pass yields the solution.

A chain is labeled by its first pole, for which we write $|[\beta_0]\rangle$; the two chains needed below start from the poles adjacent to the barrier direction, $|[\pi-\theta+\alpha]\rangle$ and $|[\pi-\theta-\alpha]\rangle$, whose representatives lie in the double region [Fig.~\ref{fig:moving-rule}(a)]. Writing a chain as $\beta_0,\beta_1=\beta_0-2\theta,\dots$, the amplitudes are
\begin{equation}
G(\beta_n)=\frac{1}{2\pi\ii}\frac{2}{\ii c}\prod_{k=1}^{n}q_k,\quad q_k=\frac{\ii c/2}{|\sin\beta_k|+\ii c/2},
\label{eq:galperin-chain}
\end{equation}
so that every move contributes a factor of modulus below unity and the weight decays along the chain.

Collecting the weights from the head of a chain, the residue at its far end is
\begin{equation}
Q(\chi)=\prod_{k=0}^{\lfloor\chi/2\theta\rfloor}\frac{\ii c/2}{|\sin(\chi-2k\theta)|+\ii c/2},
\label{eq:galperin-Q}
\end{equation}
for a chain that terminates in the death region, whereas a chain closing on itself gives instead the endpoint combination $S=q_1q_M-q_1-q_M+\prod_{m=1}^{M}q_m$.

The quantity of interest is the ratio of this amplitude to its value as $c\to\infty$, that is, to its value in the classical (hard-wall) limit: since every factor of the product tends to unity in that limit, the ratio is $Q(\chi)$ itself. Its modulus describes the attenuation of the signal by the penetrable barriers, and its phase is the phase shift relative to the hard-wall limit,
\begin{equation}
\Phi_{\chi}=\sum_{k=0}^{\lfloor\chi/2\theta\rfloor}\arctan\frac{2\lvert\sin(\chi-2k\theta)\rvert}{c},
\label{eq:galperin-phase}
\end{equation}
which vanishes as $c\to\infty$, where the classical scattering is recovered, and grows with the number of steps accumulated along the chain. The residues of the poles that fall in $(0,\theta)$ are collected in Table~\ref{tab:galperin-cases}, the upper sign referring to the chain through $\pi-\theta+\alpha$ and the lower to the chain through $\pi-\theta-\alpha$.

\begin{table}[tb]
\caption{Outgoing poles in $(0,\theta)$ and the residues at them; the two signs refer to the two sides $\lambda=\pm1$, and $\chi_\pm := \pi - \theta \pm \alpha$.}
\label{tab:galperin-cases}
\begin{ruledtabular}
\begin{tabular}{lccc}
$\pi/\theta$ & poles & side & residue\tabularnewline
\hline
odd integer & $\theta-\alpha$ & both & $\mp Q(\chi_\pm)/(2\pi\ii)$\tabularnewline
even integer & $\alpha$ & both & $\pm S/(2\pi\ii)$\tabularnewline
generic, $\lfloor\pi/\theta\rfloor$ odd & $\theta-r\mp\alpha$ & $\lambda=+1$ & $\mp Q(\chi_\pm)/(2\pi\ii)$\tabularnewline
generic, $\lfloor\pi/\theta\rfloor$ even & $r\pm\alpha$ & $\lambda=-1$ & $\mp Q(\chi_\pm)/(2\pi\ii)$\tabularnewline
\end{tabular}
\end{ruledtabular}
\end{table}

\paragraph{Non-integer $\pi/\theta$.} When $\pi/\theta$ is not an integer both chains run all the way to the death region, and it is the parity of $n$ that decides where they end. For odd $n$ the two directions are $\theta-r-\alpha$ and $\theta-r+\alpha$, approached from the upper half-plane:
\begin{align}
\Phi_{\theta-r-\alpha}&=\sum_{k=0}^{\lfloor(\pi-\theta+\alpha)/2\theta\rfloor}\arctan\frac{2\lvert\sin((2k+1)\theta-\alpha)\rvert}{c},\\
\Phi_{\theta-r+\alpha}&=\sum_{k=0}^{\lfloor(\pi-\theta-\alpha)/2\theta\rfloor}\arctan\frac{2\lvert\sin((2k+1)\theta+\alpha)\rvert}{c}.
\end{align}
For even $n$ the same two chains end instead at $r+\alpha$ and $r-\alpha$, approached from the lower half-plane:
\begin{align}
\Phi_{r+\alpha}&=\sum_{k=0}^{\lfloor(\pi-\theta+\alpha)/2\theta\rfloor}\arctan\frac{2\lvert\sin((2k+1)\theta-\alpha)\rvert}{c},\\
\Phi_{r-\alpha}&=\sum_{k=0}^{\lfloor(\pi-\theta-\alpha)/2\theta\rfloor}\arctan\frac{2\lvert\sin((2k+1)\theta+\alpha)\rvert}{c}.
\end{align}
In the limit $\alpha\to0$ the two channels carry one and the same phase shift, and the sums above collapse into the single closed form
\begin{equation}
\Phi=\sum_{k=1}^{\lfloor(\pi/\theta+1)/2\rfloor}\arctan\frac{2\sin((2k-1)\theta)}{c}.
\label{eq:galperin-phase-closed}
\end{equation}
For large $c$ every term is confined between its linear and its concave bound, and the elementary identity $\sum_{k=1}^{M}\sin((2k-1)\theta)=\sin^2(M\theta)/\sin\theta$ turns the two sides into the same geometric factor:
\begin{align}
\arctan\Bigl(\frac{2}{c}\Bigr)\frac{\sin^2(M\theta)}{\sin\theta}\;\le\;\Phi\;\le\;\frac{2}{c}\frac{\sin^2(M\theta)}{\sin\theta}
\end{align}
with $M=\Bigl\lfloor\frac{\pi/\theta+1}{2}\Bigr\rfloor$,
so that in this limit the phase shift is fixed by geometry alone, through the factor $\sin^2(M\theta)/\sin\theta$, which grows linearly with the number of steps $M$ for generic angles and saturates whenever $M\theta$ approaches $\pi/2$.

\paragraph{Odd integer $\pi/\theta$.} When $\pi/\theta=N$ is an odd integer both chains end on the single direction $\theta-\alpha$, one on each side of it, and what the scattering measures there is not either chain by itself but the sum of the two sides, $Q(\chi_+)+Q(\chi_-)$ up to a normalization factor; the phase shift is the phase of that sum. In the limit $\alpha\to0$ the two sides coincide and this phase reduces to the closed form \eqref{eq:galperin-phase-closed}, the same expression obtained above for the non-integer case.

\paragraph{Even integer $\pi/\theta$.} When $\pi/\theta=N$ is an even integer, written $N=2M$, the outgoing direction of the two chains coincides exactly with the direction of incidence: both are the angle $\alpha$ itself, so that the scattered channel and the incident channel overlap and interfere. This is visible in the equation as a closed chain, in which no end is left and the two chains drive each other, and the residue is no longer a product of independent steps but the closed-chain combination
\begin{align}
S=q_1q_M-q_1-q_M+\prod_{m=1}^{M}q_m,
\end{align}
with
\begin{equation}
q_k:=\frac{\ii c/2}{\lvert\sin((2k+1)\theta-\alpha)\rvert+\ii c/2},
\end{equation}
and the incident channel adds its own unit amplitude at this same direction, so that the two interfere and the phase shift is the argument of the sum:
\begin{equation}
\Phi_{\alpha}=\arg(1+S).
\end{equation}

The phase shifts carry one further piece of information. In the classical (hard-wall) limit the phase shift of an outgoing direction is fixed by the number of collisions that lead to it: every reflection contributes a factor $-1$, so that the accumulated phase is $\pi$ for an odd number of collisions and $0$ for an even one. A barrier of finite strength leaves this counting untouched---the chains and their lengths are fixed by geometry---and superimposes on it the continuous correction of Eq.~\eqref{eq:galperin-phase}. The phase shift is therefore an explicit function of both the interaction strength $c$ and the opening angle $\theta$, and the relation between the two can be extracted from it exactly; since $\theta=\arctan\sqrt{m/M}$ is fixed by the mass ratio, a measured phase shift determines $c$ for a system of given masses. In this way the scattering phase shifts provide a direct and exact means of measuring the interaction strength \cite{gustafson2021phaseshifts}.

\section{Conclusion and outlook}
\label{sec:conclusion}

We have formulated a systematic method for the quantum scattering of a plane wave by intersecting $\delta$-potential barriers in two dimensions, the boundary-free descendant of the Gaudin kaleidoscope. Motivated by the observation that BA solvability in the bounded kaleidoscope is a composite criterion---the KYBE is necessary but not sufficient, and the Bethe-ansatz equations require boundary conditions---we studied whether the specialness of the angles $\pi/N$ survives when the boundaries are removed. The moving-rule method reduces the singular part of the projected Lippmann--Schwinger equations to a finite-dimensional linear system, and the regular part is obtained by a regular integral equation; the outgoing state is an atomic probability measure whose normalized weights obey $\sum_rW_r=1$, while the unnormalized weights satisfy the closed form $2\pi^2\sum_r\widetilde W_r=\sum_j\lvert\kappa_j\rvert^2$, the total weight of the incident components---unity for a single plane wave and four for the four-component incident state of Sec.~\ref{sec:outlook}. The sensitivity of the distribution to the incident angle follows the classical channel structure: the number of outgoing channels is fixed at the rational angles $\theta=\pi/N$, while at generic angles channels are created and destroyed as the critical directions are crossed. This sensitivity is invisible in the scattering probabilities, where the distribution stays smooth, and is carried instead by the scattering phase shifts.

The same framework solves the quantum Galperin billiards exactly. The method of images turns the two hard walls into a reflection antisymmetry of the wave function, which collapses the two coupled projected equations into a single linear system and thereby organizes the poles of the scattering state into chains. Each chain is solved by one forward pass, and the outgoing distribution together with the scattering phase shifts relative to the hard-wall limit follows in closed form, with no adiabatic, semiclassical or partial-wave input. The phase shifts reproduce the classical channels: at generic angles the two chains terminate in the death region, at odd integer ratios they meet on a single direction where the two sides must be added, and at even integer ratios the chain closes on itself so that the scattered and incident channels interfere in the same direction. Since $\theta=\arctan\sqrt{m/M}$, the phase shifts are explicit functions of the mass ratio. The connection of these phases to Fisher information and precision measurement \cite{ding2022criticalmetrology} will be presented elsewhere. The same geometry also invites a description in terms of the complexity and scrambling diagnostics developed for quantum billiards and many-body systems \cite{camargo2024krylov,xu2024scrambling}.

Two questions remain open. The moving-rule method has been developed here for barriers that meet at a single point; whether it extends to barriers intersecting at several points, or not meeting at all, is unclear, since the finite pole closure on which the method rests has been established only for the common-vertex geometry. Barriers supported on a more general union of lines belong to the class of ``leaky'' quantum structures \cite{exner2003approx,exner2008review}, and the coincidence limit of several $\delta$ potentials has recently been re-examined \cite{loran2022multidelta}. The analysis has also been restricted throughout to repulsive barriers, $c>0$, for which the scattering state is the only relevant solution at positive energy. For $c<0$ the barriers become attractive and support bound states, and the scattering problem acquires, alongside them, the resonances associated with these states. Attractive zero-range potentials tie the phase shift to the bound-state spectrum through Levinson's theorem \cite{camblong2019levinson,belloni2014delta}, and in few-body settings they produce the impurity bound states and Efimov structures studied in Refs.~\cite{shi2016boundstates,naidon2017efimov}. How such resonances interact with the collision structure identified here is an open question.

\appendix
\begin{acknowledgments}
Y.-C.~Yu. thanks Y.-B.~Zhang for helpful discussions. This work was supported by the National Natural Science Foundation of China under Grant Nos.~12274419, the Quantum Science and Technology-National Science and Technology Major Project under Grant No.~2023ZD0300404, the CAS Project for Young Scientists in Basic Research under Grant No.~YSBR-055, and the National Key R\&D Program of China under Grant No.~2022YFA1404104.
\end{acknowledgments}

\section*{Data availability}
The computational codes used to generate the results of this work are
publicly available at
\url{https://github.com/YiCongYu-CAS/intersecting-delta-barrier-scattering}.

\section{Derivation of the moving rule}
\label{app:movingrule}

Here we derive the moving rule of Eq.~\eqref{eq:rules}, following the pole
representation of Eq.~\eqref{eq:pole}. We first write the kernel in
explicit form, then evaluate its action on a single pole by contour
integration, and finally read off the displacement and the residue of the
generated poles. Throughout we set $a=1$ and use the
complexified projected momentum $\zeta$ of Eq.~\eqref{eq:pole}; real-axis
integrals are understood as $\zeta$ taken on the real axis, and
$s:=\operatorname{sign}(\sin\beta)$ denotes the side of the pole.

At unit energy, the kernel of Eq.~\eqref{eq:kernel} acts on a function $f=f(\zeta)$ as
\begin{align}
\label{eq:app:kernel}
&(\hat I_y\hat G_0\hat R_\theta f)(\zeta)
= \frac{1}{2\sqrt{1-\zeta^2}}\int\dd\zeta'\,f(\zeta')\notag\\
&\quad \times\Bigl[\frac{1}{\zeta'-\Gamma^-_\theta(\zeta)+\iz}
 - \frac{1}{\zeta'-\Gamma^+_\theta(\zeta)-\iz}\Bigr],
\end{align}
with $\Gamma^\pm_\theta(\zeta)=\zeta\cos\theta\pm\sqrt{1-\zeta^2}\,\lvert\sin\theta\rvert$
as in Eq.~\eqref{eq:Gamma}. The small imaginary parts place the pole of the
first term at $\zeta'=\Gamma^-_\theta(\zeta)-\iz$, i.e.\ in the lower
half-plane, and that of the second term at
$\zeta'=\Gamma^+_\theta(\zeta)+\iz$ in the upper half-plane. For the kernel
angles $\theta_{km}=\lvert\theta_k-\theta_m\rvert$ of the problem,
$\theta\in[0,\pi)$ so that $\lvert\sin\theta\rvert=\sin\theta$, and with
$\alpha:=\arccos\zeta\in[0,\pi]$ one has
$\Gamma^\pm_\theta(\zeta)=\cos(\alpha\mp\theta)$, a form used below.

We evaluate this kernel on a single pole by taking $f=\ket{\beta}$, whose boundary value on the real axis is
\begin{align}
\ket{\beta}(\zeta')
= \frac{1}{\zeta'-\cos\beta-\iz\,s},
\quad
s:=\operatorname{sign}(\sin\beta),
\end{align}
with the pole at $\zeta'=\cos\beta+\iz\,s$, in the upper half-plane for
$s=+1$ and in the lower one for $s=-1$. We close the integration contour in
Eq.~\eqref{eq:app:kernel} in the upper half-plane, with $\zeta$ fixed as the
point where the output is evaluated, and examine the two terms of the
bracket separately. The first term contributes
\begin{align}
J_- = \int\dd\zeta'\,
\frac{1}{\zeta'-\cos\beta-\iz s}\,
\frac{1}{\zeta'-\Gamma^-_\theta(\zeta)+\iz}.
\end{align}
For $s=+1$ the pole of $\ket{\beta}$ lies in the upper half-plane while that
of the kernel factor lies in the lower one, so the contour encloses only the
former and $J_-=2\pi\ii\,(\cos\beta-\Gamma^-_\theta(\zeta))^{-1}$; for $s=-1$
both poles lie below the axis and $J_-=0$. The second term contributes
\begin{align}
J_+ = \int\dd\zeta'\,
\frac{1}{\zeta'-\cos\beta-\iz s}\,
\frac{1}{\zeta'-\Gamma^+_\theta(\zeta)-\iz},
\end{align}
whose kernel pole lies in the upper half-plane. For $s=+1$ the two enclosed
residues, $(\cos\beta-\Gamma^+_\theta(\zeta))^{-1}$ and
$(\Gamma^+_\theta(\zeta)-\cos\beta)^{-1}$, cancel, hence $J_+=0$; for $s=-1$
only the kernel pole is enclosed and
$J_+=2\pi\ii\,(\Gamma^+_\theta(\zeta)-\cos\beta)^{-1}$. Combining the two
terms, the full integral is nonvanishing only through the pole that shares
the half-plane of $\ket{\beta}$, and reads
\begin{align}
\label{eq:app:contour}
&\int\dd\zeta'\,\ket{\beta}(\zeta')
\Bigl[
\frac{1}{\zeta'-\Gamma^-_\theta(\zeta)+\iz}
-\frac{1}{\zeta'-\Gamma^+_\theta(\zeta)-\iz}
\Bigr]\notag\\
&\quad = 2\pi\ii\,
\bigl(\Gamma^{-\operatorname{sign}(\sin\beta)}_\theta(\zeta)-\cos\beta\bigr)^{-1},
\end{align}
where $-\operatorname{sign}(\sin\beta)$ selects $\Gamma^-$ for $s=+1$ and
$\Gamma^+$ for $s=-1$, and the identity holds for $\zeta$ on the energy
shell, $|\zeta|\le1$. Inserting Eq.~\eqref{eq:app:contour} into
Eq.~\eqref{eq:app:kernel} and using the definition of
$\hat K_\theta$ in Eq.~\eqref{eq:Kform}, we obtain the exact identity
\begin{align}
\label{eq:app:Kbeta}
\hat K_\theta\ket{\beta}(\zeta)
= \frac{1}{\sqrt{1-\zeta^2}}\,
\frac{1}{\Gamma^{-\operatorname{sign}(\sin\beta)}_\theta(\zeta)-\cos\beta}.
\end{align}
The vanishing contributions in the two cases ($J_+=0$ for $s=+1$ and $J_-=0$
for $s=-1$) arise from the cancellation of two enclosed residues, not from
the pole lying outside the contour.

Equation~\eqref{eq:app:contour} shows that a pole with $s=+1$
($\sin\beta>0$) moves only through the $\Gamma^-$ channel and one with
$s=-1$ only through the $\Gamma^+$ channel. With
$\Gamma^\pm_\theta(\zeta)=\cos(\alpha\mp\theta)$ and
$\alpha=\arccos\zeta\in[0,\pi]$, the singularity condition
$\Gamma^{\mp}_\theta(\zeta)=\cos\beta$ reads
$\cos(\alpha\pm\theta)=\cos\beta$ for the two channels, i.e.\
$\alpha\pm\theta=\pm\beta+2\pi n$. Under the constraint $\alpha\in[0,\pi]$
each channel admits two solutions,
\begin{align}
s=+1:&\quad \alpha=\beta-\theta \ \ \text{or}\ \ 2\pi-\beta-\theta,\\
s=-1:&\quad \alpha=2\pi-\beta+\theta \ \ \text{or}\ \ \beta+\theta-2\pi,
\end{align}
which give only two locations $\zeta_\ast=\cos\alpha$ on the energy shell,
\begin{align}
\label{eq:app:moves}
\zeta_\ast&=\cos(\beta-\theta)\quad(\text{channel A}),\notag\\
\zeta_\ast&=\cos(\beta+\theta)\quad(\text{channel B}).
\end{align}
The corresponding feasibility conditions combine into
$\sin\beta\,\sin(\beta-\theta)>0$ for channel A and
$\sin\beta\,\sin(\beta+\theta)<0$ for channel B. Geometrically, $s$ marks
the half-circle of the pole and the real axis is the branch cut of the
$\ket{\beta}$ notation: channel A rotates the pole clockwise by $\theta$
without crossing the real axis, whereas channel B rotates it
anticlockwise across the axis.

The residue of the displaced pole is obtained by differentiating the channel function $\Gamma^{\mp}_\theta(\zeta)=\cos(\arccos\zeta\mp\theta)$ at $\zeta=\zeta_\ast$:
\begin{align}
\frac{\dd\Gamma^{\mp}_\theta}{\dd\zeta}\bigg|_{\zeta=\zeta_\ast}
= \frac{\sin(\arccos\zeta_\ast\mp\theta)}{\sqrt{1-\zeta_\ast^2}}
= \frac{\pm\sin\beta}{\sqrt{1-\zeta_\ast^2}},
\end{align}
where $\arccos\zeta_\ast\mp\theta=\pm\beta$ (mod $2\pi$). Hence the singular
part of Eq.~\eqref{eq:app:Kbeta} near $\zeta_\ast$ is
\begin{align}
\frac{1}{\sqrt{1-\zeta^2}}\,
\frac{1}{\Gamma^{\mp}_\theta(\zeta)-\cos\beta}
\;\approx\;
\frac{1}{\pm\sin\beta}\,\frac{1}{\zeta-\zeta_\ast},
\end{align}
which assigns $+1/\lvert\sin\beta\rvert$ to channel A and
$-1/\lvert\sin\beta\rvert$ to channel B. The common factor
$1/\lvert\sin\beta\rvert$ is the Jacobian of the angle parametrization
$\beta\mapsto\cos\beta$, and the sign distinguishes the move that crosses
the branch cut of the $\ket{\beta}$ notation from the one that does not.
Combining the displacements of Eq.~\eqref{eq:app:moves} with these residues,
and writing the result as the extraction of the singular part, gives exactly
the moving rule of Eq.~\eqref{eq:rules}. In particular, for
$0<\beta<\theta$ or $\pi<\beta<\pi+\theta$ both Heaviside conditions fail,
$\hat K_\theta\ket{\beta}$ is purely regular, and no new pole is produced:
this is the death region of Fig.~\ref{fig:moving-rule}, which guarantees
that the closure generated by repeated moves is finite.

\section{Finiteness of the pole closure for arbitrary configurations}
\label{app:closure}

In Sec.~\ref{sec:method} the closure of the seeds under the moving rules is defined by iterating the simultaneous replacement of Eq.~\eqref{eq:closure} until no further poles are produced. Every pole originates from the incident wave: the scattering off barrier $k$ creates the seed pair $\{\pm\theta_{0k}\}$, and all these $\delta$ contributions form the initial set $C^{(0)}:=\bigcup_{k=1}^{M}\{\pm\theta_{0k}\}$. Write $T_\theta$ for the map induced by the moving rule Eq.~\eqref{eq:rules}: each allowed move advances an angle by $\pm\theta$, and in the death regions no new angle is produced. Let $\Theta:=\{\theta_{km}:1\le k<m\le M\}$ be the set of all differences between barrier directions. Since the Heaviside conditions and the generated angles depend only on the angle and on the difference $\theta_{km}$, the projected function to which a pole belongs does not affect which new angles are produced; the closure can therefore be followed as a single nested sequence of finite sets $C^{(0)}\subseteq C^{(1)}\subseteq\cdots$, where $C^{(n+1)}$ is obtained from $C^{(n)}$ by the simultaneous replacement of Eq.~\eqref{eq:closure}: every angle added at step $n+1$ is of the form $T_\theta(\beta)$ for some $\beta\in C^{(n)}$ and some $\theta\in\Theta$. Our aim in this appendix is to prove that, for every barrier configuration --- an arbitrary number $M$ of barriers and arbitrary orientation angles $\theta_k$ --- this sequence stabilizes after finitely many steps: there exists a finite $N$ with $C^{(N+1)}=C^{(N)}$, so that the limit $C^{(\infty)}=\bigcup_{n\geq0}C^{(n)}$ is the closure computed in Sec.~\ref{sec:method}. Because $C^{(0)}$ is finite and each step adds only finitely many new angles, the stabilized limit is then automatically finite; this is the finiteness asserted in Sec.~\ref{sec:method}.

To prove this stabilization, we first prove a lemma: an increasing sequence of sets generated from a finite set by finitely many maps either stops growing after finitely many steps, or else contains an infinite orbit in which every step moves to a new point.

\begin{lemma}[Infinite nontrivial orbits in finite systems of maps]
\label{lem:infinite-orbit}
Let $X$ be a set and let $f_0=\operatorname{id}_X$ together with $f_1,\ldots,f_M\colon X\to X$ be finitely many maps. For $C\subseteq X$ define the set operator $\Phi(C)=\bigcup_{m=0}^{M}f_m(C)$. Let $A\subseteq X$ be a nonempty finite set, and define recursively $A_0=A$ and $A_{n+1}=\Phi(A_n)$ for $n\geq0$. Since $f_0=\operatorname{id}_X$, one has $A_n\subseteq A_{n+1}$ for every $n$. If, moreover, the sequence grows strictly at every step, $A_n\subsetneq A_{n+1}$ for every $n\geq0$, then there exists an infinite sequence $a_0,a_1,a_2,\ldots$ in $X$ together with indices $i_0,i_1,i_2,\ldots\in\{1,\ldots,M\}$ such that, for every $n\geq0$,
\begin{align}
a_{n+1}=f_{i_n}(a_n)
\quad\text{and}\quad
a_{n+1}\neq a_n .
\end{align}
\end{lemma}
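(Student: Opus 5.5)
This is a König's-lemma argument on a finitely branching tree of ``first appearances''. For every $x\in A_\infty:=\bigcup_n A_n$ let $d(x):=\min\{n: x\in A_n\}$ be its depth. Points of depth $0$ are the elements of $A$. A point $y$ of depth $n+1\geq1$ lies in $A_{n+1}\setminus A_n$. Since $A_{n+1}=\Phi(A_n)$, we can write $y=f_i(x)$ for some $x\in A_n$ and some index $i$. We have $i\neq0$, because $f_0(x)=x\in A_n$ while $y\notin A_n$. For each such $y$ I will choose one such pair $(x,i)$ and call $x$ the parent of $y$. The parent satisfies $d(x)\leq n$ and $x\neq y$.

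The first key step is to upgrade the parent to a point of depth exactly $n$. Suppose $d(x)=m<n$. Then $y=f_i(x)\in\Phi(A_m)=A_{m+1}\subseteq A_n$, which contradicts $d(y)=n+1$. So $d(x)=n$. This gives a tree rooted at the finite set $A$ (a forest with finitely many roots). Edges go from parents to children, every edge increases the depth by exactly one, and every edge is realized by some $f_i$ with $i\geq1$ whose endpoints are distinct.

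The second step is finite branching. Each point $x$ has at most $M$ children, since every child equals $f_i(x)$ for some $i\in\{1,\dots,M\}$. The hypothesis of strict growth gives, for every $n$, a point $y\in A_{n+1}\setminus A_n$, which has depth $n+1$. Following parents from $y$ back to a root gives a path of length $n+1$. Hence the forest has paths of arbitrarily large length.

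The final step applies König's lemma, and this is the main obstacle because one must avoid circularity. Call a node \emph{rich} if its subtree contains nodes of arbitrarily large depth. Since $A$ is finite and the depths are unbounded, some root is rich; take it as $a_0$. Now suppose $a_n$ is rich. It has at most $M$ children, and every deep node in its subtree lies below one of them. So some child is rich; take it as $a_{n+1}$, with $i_n$ its chosen index.

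This yields the required sequence. Each step satisfies $a_{n+1}=f_{i_n}(a_n)$ with $i_n\in\{1,\dots,M\}$. Each step also satisfies $a_{n+1}\neq a_n$, and in fact $d(a_{n+1})=d(a_n)+1$, so all the points $a_n$ are pairwise distinct. The depth bookkeeping above is what makes the chosen parent of depth $n$ exist, and that is the only nontrivial ingredient of the argument.
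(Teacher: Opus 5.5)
Your proof is correct and is essentially the paper's argument: in both, strict growth yields an infinite, finitely branching tree whose edges are genuine moves by some $f_i$ with $i\ge1$, and K\"onig's lemma supplies the infinite branch. The differences are only in the bookkeeping. The paper takes as nodes all nontrivial orbit sequences starting in $A$, shows they exist in every length via the characterization of $A_n$ as the points reachable in at most $n$ steps together with a step-deletion argument, and then cites K\"onig's lemma. You instead build a breadth-first forest on the points themselves, where your check that a parent $x$ of $y$ has $d(x)=d(y)-1$ replaces the deletion argument; you prove K\"onig's lemma inline, and you obtain the slightly stronger conclusion that the $a_n$ are pairwise distinct.
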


\begin{proof}
We first characterize the sets $A_n$: $A_n$ consists precisely of the points that can be reached from some element of $A$ by at most $n$ applications of the maps $f_1,\ldots,f_M$. More precisely,
\begin{align*}
A_n
&=\Bigl\{f_{j_k}\circ\cdots\circ f_{j_1}(a):
a\in A,\ 0\leq k\leq n,\\
&\quad\quad
j_1,\ldots,j_k\in\{1,\ldots,M\}\Bigr\},
\end{align*}
where for $k=0$ the composition is understood as the identity, so that the displayed element is simply $a$. The identity is proved by induction on $n$. For $n=0$ both sides equal $A$. If the claim holds for $n$, then, using $A_{n+1}=\bigcup_{m=0}^{M}f_m(A_n)$ together with $f_0=\operatorname{id}_X$, the set $A_{n+1}$ is the union of $A_n$ (the term $m=0$) with all points obtained by appending one more application of some $f_m$, $1\leq m\leq M$, to a point already in $A_n$; this is exactly the characterization for $n+1$.

We next show that nontrivial finite orbits of arbitrary length exist. Fix an integer $n\geq1$. By the strict-growth assumption we may choose an element $x_n\in A_n\setminus A_{n-1}$. Since $x_n\in A_n$, the characterization just established provides $a_0\in A$ and indices $j_0,\ldots,j_{n-1}\in\{0,1,\ldots,M\}$ such that the recursion
\begin{align*}
a_{k+1}=f_{j_k}(a_k),\quad 0\leq k<n,
\end{align*}
terminates at $a_n=x_n$. We claim that every step of this orbit moves to a different point, $a_{k+1}\neq a_k$ for all $0\leq k<n$. Suppose, to the contrary, that $a_{k+1}=a_k$ for some $k$. Then the $k$-th application did not change the current point, so it may be deleted from the sequence of maps without altering the final result: starting from $a_0\in A$, the point $x_n=a_n$ is reached in at most $n-1$ steps. By the same characterization this implies $x_n\in A_{n-1}$, contradicting $x_n\in A_n\setminus A_{n-1}$. Hence $a_{k+1}\neq a_k$ for every $0\leq k<n$. Moreover, since $f_0(a_k)=a_k$ leaves every point fixed, an index that produces a genuine move cannot be $0$; therefore $j_k\in\{1,\ldots,M\}$ for all $k$. By the arbitrariness of $n$, the system contains nontrivial finite orbits of every length $n\geq1$: sequences $a_0\to a_1\to\cdots\to a_n$ with $a_0\in A$, $a_{k+1}=f_{j_k}(a_k)$, $j_k\in\{1,\ldots,M\}$, and $a_{k+1}\neq a_k$ for $0\leq k<n$.

We now organize these orbits into a rooted tree $T$. The root is denoted by $r$. Every other node of $T$ is a finite sequence $(a_0,a_1,\ldots,a_n)$, $n\geq0$, satisfying: (i) $a_0\in A$; (ii) for each $0\leq k<n$ there is an index $i_k\in\{1,\ldots,M\}$ with $a_{k+1}=f_{i_k}(a_k)$; and (iii) $a_{k+1}\neq a_k$ for each $0\leq k<n$. The parent--child relation is the extension relation between finite sequences: $(a_0,\ldots,a_n)$ is the parent of $(a_0,\ldots,a_n,a_{n+1})$ whenever the longer sequence again satisfies the conditions above. The children of the root $r$ are precisely the one-element sequences $(a_0)$ with $a_0\in A$.

The tree $T$ is finitely branching: since $A$ is finite, the root has only finitely many children, and since there are only finitely many maps $f_1,\ldots,f_M$, every non-root node $(a_0,\ldots,a_n)$, whose children are among $(a_0,\ldots,a_n,f_i(a_n))$ with $f_i(a_n)\neq a_n$, has at most $M$ children.

Moreover, $T$ is infinite: by the preceding construction, for every $n\geq1$ there exists a nontrivial finite orbit of length $n$, hence a node of $T$ at depth $n$; thus $T$ contains nodes at arbitrarily large depths and therefore infinitely many nodes.

Finally, K\"onig's infinity lemma asserts that every infinite, finitely branching rooted tree contains an infinite path \cite[Lemma 8.1.2, p.~215]{Diestel2017GraphTheory}; by the two preceding paragraphs $T$ is both, so it contains an infinite path
\begin{align*}
r,\quad (a_0),\quad (a_0,a_1),\quad (a_0,a_1,a_2),\quad \ldots
\end{align*}
These finite sequences are mutually compatible and therefore determine an infinite sequence $a_0,a_1,a_2,\ldots$. By the definition of $T$, for each $n\geq0$ there exists an index $i_n\in\{1,\ldots,M\}$ such that $a_{n+1}=f_{i_n}(a_n)$ and $a_{n+1}\neq a_n$. This completes the proof.
\end{proof}

\begin{figure}[t]
\centering
\includegraphics[width=\columnwidth]{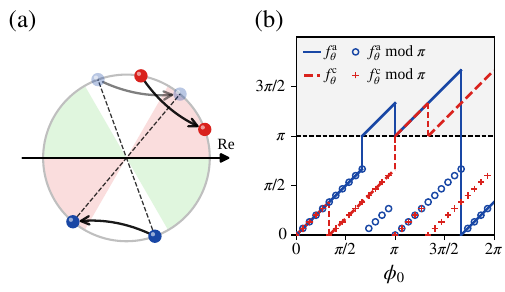}
\caption{
Schematic examples of the two elementary moves of the moving rule, for
$\theta=\pi/3$.
(a) Circle of pole angles: a clockwise move $\beta\mapsto\beta-\theta$
starting in the upper half-plane (red), one starting in the lower
half-plane (blue), and the mirror of the latter under the central symmetry
$\beta\mapsto\beta+\pi$ (translucent blue); dashed diameters join each pole
of the lower example to its mirror, and light red/green shading marks the
death and double regions of Fig.~\ref{fig:moving-rule}. A move in the lower
half-plane is thereby mapped onto a move in the upper half-plane, so that
it suffices to follow the upper half-plane alone.
(b) The maps $f_\theta^{\mathrm a}$ and $f_\theta^{\mathrm c}$ defined
below, shown as functions of the angle $\phi_0$ (solid and dashed curves),
together with their reductions modulo $\pi$ (open circles and crosses),
which fold into $[0,\pi)$; the light shading marks the region
$\phi_0>\pi$. After this folding both maps are nonincreasing in the reduced
angle $\rho=\phi_0\bmod\pi$, every nontrivial move lowering $\rho$ by at
least $\min(\theta,\pi-\theta)$ [Eq.~\eqref{eq:app:mono}]---a monotonicity
that holds for every angle $\theta\in(0,\pi)$ and underpins the finiteness
of the closure.
}
\label{fig:moving-examples}
\end{figure}

In order to study the moving rule Eq.~\eqref{eq:rules} as an iteration of maps on the circle of angles, we isolate its two elementary displacements (Fig.~\ref{fig:moving-examples}): the anticlockwise move $f_\theta^{\mathrm a}$ and the clockwise move $f_\theta^{\mathrm c}$. For fixed $\theta\in[0,\pi)$ and $\phi_0\in[0,2\pi)$ we define
\begin{align}
\label{eq:app:fa}
f_\theta^{\mathrm a}(\phi_0)
&=
\begin{cases}
\phi_0, & \phi_0\in[0,\pi-\theta),\\
\phi_0+\theta, & \phi_0\in[\pi-\theta,\pi),\\
\phi_0, & \phi_0\in[\pi,2\pi-\theta),\\
\phi_0+\theta-2\pi, & \phi_0\in[2\pi-\theta,2\pi),
\end{cases}\\[3mm]
\label{eq:app:fc}
f_\theta^{\mathrm c}(\phi_0)
&=
\begin{cases}
\phi_0, & \phi_0\in[0,\theta),\\
\phi_0-\theta, & \phi_0\in[\theta,\pi),\\
\phi_0, & \phi_0\in[\pi,\pi+\theta),\\
\phi_0-\theta, & \phi_0\in[\pi+\theta,2\pi).
\end{cases}
\end{align}
The shifted sectors are exactly the activation sets of the two Heaviside conditions in Eq.~\eqref{eq:rules}: $f_\theta^{\mathrm a}$ adds $\theta$ where $\sin\phi_0\sin(\phi_0+\theta)<0$, and $f_\theta^{\mathrm c}$ subtracts $\theta$ where $\sin\phi_0\sin(\phi_0-\theta)>0$; on the remaining sectors each map is the identity. Angles are understood modulo $2\pi$, so in the last sector of $f_\theta^{\mathrm a}$ the value $\phi_0+\theta$, which would leave $[0,2\pi)$, is represented by $\phi_0+\theta-2\pi\in[0,\theta)$. Since $0\le\theta<\pi$, the four sectors of each map are disjoint and cover $[0,2\pi)$, the maps are well defined and reduce to the identity for $\theta=0$, and the sector boundaries, where the sine product vanishes, are immaterial, being understood by one-sided limits as in Appendix~\ref{app:Q}.

We record a monotonicity property of these maps, which will be the engine of the stabilization argument below. For $x\in[0,2\pi)$ write $\rho(x):=x\bmod\pi\in[0,\pi)$ for the angle reduced modulo $\pi$, and set $d_\theta:=\min(\theta,\pi-\theta)$. Both maps are nonincreasing in the reduced coordinate: for $h\in\{f_\theta^{\mathrm a},f_\theta^{\mathrm c}\}$ one has $\rho(h(x))\le\rho(x)$ for every $x$, and whenever the move is nontrivial, $h(x)\neq x$, the reduced angle drops by at least $d_\theta$,
\begin{align}
\label{eq:app:mono}
\rho\bigl(h(x)\bigr)\le\rho(x)-d_\theta\quad(h(x)\neq x).
\end{align}
Indeed, on its active sectors $f_\theta^{\mathrm a}$ crosses the $\pi$ boundary and lowers $\rho$ by $\pi-\theta$, while $f_\theta^{\mathrm c}$ lowers it by $\theta$; both amounts are at least $d_\theta$, and on the complementary sectors both maps act as the identity, leaving $\rho$ unchanged.

Starting from the same initial set $C^{(0)}$, we now generate a larger single sequence by acting, at every step, on \emph{every} angle with the moving rules of \emph{all} differences simultaneously. With
\begin{align}
\label{eq:app:big-iter}
\widetilde C^{(0)}&:=C^{(0)},\notag\\
\Phi(S)&:=S\cup\bigcup_{\theta\in\Theta}T_\theta(S),\notag\\
\widetilde C^{(n+1)}&:=\Phi\bigl(\widetilde C^{(n)}\bigr),
\end{align}
every angle of $\widetilde C^{(n)}$ is advanced by every $T_\theta$, $\theta\in\Theta$, at each step, whereas in the closure sequence an angle is moved only by the differences with which it enters the coupled equations. Since the closure step only ever applies differences $\theta\in\Theta$ to angles of $C^{(n)}$, induction on $n$ shows $C^{(n)}\subseteq\widetilde C^{(n)}$ for every $n$: every angle added to $C^{(n+1)}$ is of the form $T_\theta(\beta)$ with $\beta\in C^{(n)}\subseteq\widetilde C^{(n)}$ and $\theta\in\Theta$, hence lies in $T_\theta(\widetilde C^{(n)})\subseteq\widetilde C^{(n+1)}$.

We prove the stronger statement that the enlarged sequence $\widetilde C^{(n)}$ stabilizes after finitely many steps, so that every $\widetilde C^{(n)}$ is finite; together with the inclusion $C^{(n)}\subseteq\widetilde C^{(n)}$ this yields the finiteness of the closure $C^{(\infty)}$ directly, which is the desired result of this appendix.

We argue by contradiction. A vanishing difference $\theta_{km}=0$, which occurs for two parallel barriers, makes $T_0$ and both maps $f_0^{\mathrm a}$, $f_0^{\mathrm c}$ the identity: they generate no new angle, and being already represented by the identity in the family they may be kept without changing $\Phi$. Genuine moves are therefore produced only by nonzero differences; write $\Theta_*:=\{\theta\in\Theta:\theta>0\}$. If $\Theta_*$ is empty there are no genuine moves at all and the enlarged sequence is constant after the first step, so the claim is trivial. Otherwise set
\begin{align*}
\delta:=\min_{\theta\in\Theta_*}d_\theta=\min_{\theta\in\Theta_*}\min(\theta,\pi-\theta)>0 .
\end{align*}
Suppose that $\widetilde C^{(n)}$ never stabilizes. Being increasing, the sequence then grows strictly at every step, $\widetilde C^{(n)}\subsetneq\widetilde C^{(n+1)}$ for all $n$. The maps $f_\theta^{\mathrm a}$ and $f_\theta^{\mathrm c}$, $\theta\in\Theta$, together with the identity, form a finite family of maps of the circle of angles, and the recursion $\widetilde C^{(n+1)}=\Phi(\widetilde C^{(n)})$ is the iteration of exactly this family in the sense of Lemma~\ref{lem:infinite-orbit} (the identity accounts for the term $S$ in $\Phi$, and $f_\theta^{\mathrm a}(S)\cup f_\theta^{\mathrm c}(S)=S\cup T_\theta(S)$). Lemma~\ref{lem:infinite-orbit} therefore supplies an infinite orbit $a_0,a_1,a_2,\ldots$ in which every step is a genuine move, $a_{n+1}\neq a_n$, performed by one of the maps $f_\theta^{\mathrm a}$ or $f_\theta^{\mathrm c}$; the zero-difference maps are the identity and cannot appear in such a step, so the acting maps carry indices $\theta\in\Theta_*$. By the monotonicity property of Eq.~\eqref{eq:app:mono} each such step lowers the reduced angle by at least $\delta$,
\begin{align*}
\rho(a_{n+1})\le\rho(a_n)-\delta,
\end{align*}
so that $\rho(a_n)\le\rho(a_0)-n\delta$ for every $n$. The right-hand side becomes negative for sufficiently large $n$, contradicting $\rho(a_n)\ge0$. Hence $\widetilde C^{(n)}$ stabilizes after finitely many steps: there is a finite $N$ with $\widetilde C^{(N+1)}=\widetilde C^{(N)}$, so that $\widetilde C^{(\infty)}=\widetilde C^{(N)}$ is finite. Since $C^{(n)}\subseteq\widetilde C^{(n)}$ for every $n$, the closure $C^{(\infty)}=\bigcup_{n\geq0}C^{(n)}\subseteq\widetilde C^{(\infty)}$ is finite as well. This completes the proof of the finiteness of the pole closure asserted in Sec.~\ref{sec:method}.

\section{Closed forms of the regularized remainder}
\label{app:Q}

For completeness we record here the explicit piecewise forms of the
regularized remainder
$Q_{\theta,\beta}(z):=s(z)\bigl(\hat K_\theta\ket{\beta}-\mathcal S_\theta\ket{\beta}\bigr)$
introduced in Eq.~\eqref{eq:Qdef}, obtained by the analytic cancellation
described in Sec.~\ref{sec:method}. The formulas hold for every barrier angle
$\theta\in(0,\pi)$ and every pole angle $\beta$ with $\sin\beta\neq0$.
Throughout, $s(z):=\sqrt{1-z^2}$, $\alpha:=\arccos z\in[0,\pi]$,
$s_\beta:=\operatorname{sign}(\sin\beta)$, and we use the abbreviations
$\theta_{<}:=\min(\theta,\pi-\theta)$ and $\theta_{>}:=\max(\theta,\pi-\theta)$. Sector
boundaries are understood by one-sided limits, so the value at
$\beta=\theta_{<},\theta_{>}$ or $\sin\beta=0$ is immaterial.

For a pole in the upper half-plane, $0<\beta<\pi$,
\begin{widetext}
\begin{align}
\label{eq:appQ:upper}
Q_{\theta,\beta}(z)
&=
\begin{cases}
\displaystyle
\frac{1}{\cos(\alpha+\theta)-\cos\beta},
& 0<\beta<\theta_{<} \ \ \text{(death region)},\\[3mm]
\displaystyle
\frac{\sin(\alpha+\beta)}
{2\sin\beta\;\sin\frac{\alpha+\beta-\theta}{2}\;\sin\frac{\alpha+\beta+\theta}{2}},
& \theta_{<}<\beta<\theta_{>},\ \ \theta\le\frac{\pi}{2}
\ \ \text{(single move } \beta-\theta\text{)},\\[5mm]
\displaystyle
\frac{\sin(\beta-\alpha)}
{2\sin\beta\;\sin\frac{\alpha-\beta-\theta}{2}\;\sin\frac{\alpha+\theta-\beta}{2}},
& \theta_{<}<\beta<\theta_{>},\ \ \theta>\frac{\pi}{2}
\ \ \text{(single move } \beta+\theta\text{)},\\[5mm]
\displaystyle
\frac{1}{\cos(\alpha-\theta)-\cos\beta},
& \theta_{>}<\beta<\pi \ \ \text{(crossing region)}.
\end{cases}
\end{align}
\end{widetext}
The four sectors correspond to the three output patterns of the moving
rule~Eq.~\eqref{eq:rules}: in the death region no pole is produced and
the remainder equals the full action; in the single-move regions the pole
generated by the active move, $\beta-\theta$ for $\theta\le\pi/2$ or
$\beta+\theta$ for $\theta>\pi/2$, is subtracted with its exact residue; in
the crossing region the subtracted pole has moved across the real axis. For
$\theta\le\pi/2$ one has $\theta_{<}=\theta$, $\theta_{>}=\pi-\theta$, and the middle two rows
collapse to the single sector $\theta<\beta<\pi-\theta$.

For a pole in the lower half-plane, $\pi<\beta<2\pi$, it is convenient to
set $\gamma:=2\pi-\beta\in(0,\pi)$, so that $\cos\beta=\cos\gamma$ and
$\sin\beta=-\sin\gamma$; then
\begin{widetext}
\begin{align}
\label{eq:appQ:lower}
Q_{\theta,\beta}(z)
&=
\begin{cases}
\displaystyle
\frac{1}{\cos(\alpha-\theta)-\cos\gamma},
& \theta_{>}<\gamma<\pi \ \ \text{(death region)},\\[3mm]
\displaystyle
\frac{\sin(\alpha+\gamma)}
{2\sin\gamma\;\sin\frac{\alpha+\gamma+\theta}{2}\;\sin\frac{\alpha+\gamma-\theta}{2}},
& \theta_{<}<\gamma<\theta_{>},\ \ \theta\le\frac{\pi}{2}
\ \ \text{(single move } \beta-\theta\text{)},\\[5mm]
\displaystyle
\frac{\sin(\gamma-\alpha)}
{2\sin\gamma\;\sin\frac{\alpha-\gamma+\theta}{2}\;\sin\frac{\alpha-\theta-\gamma}{2}},
& \theta_{<}<\gamma<\theta_{>},\ \ \theta>\frac{\pi}{2}
\ \ \text{(single move } \beta+\theta\text{)},\\[5mm]
\displaystyle
\frac{1}{\cos(\alpha+\theta)-\cos\gamma},
& 0<\gamma<\theta_{<} \ \ \text{(double region)}.
\end{cases}
\end{align}
\end{widetext}
The rows mirror those of the upper half-plane under $\beta\mapsto2\pi-\beta$.
In the two-barrier examples of the main text, $\theta=\pi/3\le\pi/2$, and
the poles that occur lie in the death and single-move sectors only.

\bibliography{ref}

\end{document}